\newcommand\xldownrupharpoon[2][]{%
\ext@arrow 0099{\ldownrupharpoonfill@}{#1}{#2}}
 \def\ldownrupharpoonfill@{%
\arrowfill@\leftharpoondown\relbar\rightharpoonup}
\newcommand{\tS}{\text{S}}
\newcommand{\N}{\mathbb{N}}
\newcommand{\bP}{\mathbf{P}}
\newcommand{\R}{\mathbb{R}}
\newcommand{\cC}{\mathcal{C}}
\newcommand{\cK}{\mathcal{K}}
\newcommand{\cR}{\mathcal{R}}
\newcommand{\cS}{\mathcal{S}}
\newcommand{\sK}{{\sf K}}
\newcommand{\sS}{{\sf S}}
\newcommand{\supp}{{\rm supp\,}}
\DeclareRobustCommand\widecheck[1]{{\mathpalette\@widecheck{#1}}}
\def\@widecheck#1#2{%
    \setbox\z@\hbox{\m@th$#1#2$}%
    \setbox\tw@\hbox{\m@th$#1%
       \widehat{%
          \vrule\@width\z@\@height\ht\z@
          \vrule\@height\z@\@width\wd\z@}$}%
    \dp\tw@-\ht\z@
    \@tempdima\ht\z@ \advance\@tempdima2\ht\tw@ \divide\@tempdima\thr@@
    \setbox\tw@\hbox{%
       \raise\@tempdima\hbox{\scalebox{1}[-1]{\lower\@tempdima\box
\tw@}}}%
    {\ooalign{\box\tw@ \cr \box\z@}}}
\numberwithin{equation}{section}
\theoremstyle{plain}
\newtheorem{theorem}{Theorem}[section]
\newtheorem{corollary}[theorem]{Corollary}
\newtheorem{proposition}[theorem]{Proposition}
\theoremstyle{definition}
\newtheorem{definition}[theorem]{Definition}
\newtheorem{example}[theorem]{Example}
\theoremstyle{remark}
\begin{document}
\title[Fiber decomposition of DRNs with applications]
{Fiber decomposition of deterministic reaction networks with applications}
\author{Carsten Wiuf}
\author{Chuang Xu}
\address{
Department of Mathematical Sciences\\
University of Copenhagen, Copenhagen\\
2100, Denmark}
\address{
Faculty of Mathematics\\
Technical University of Munich, Munich\\
Garching bei M\"{u}nchen\\
85748, Germany.}
\email{wiuf@math.ku.dk}
\email{xuc@ma.tum.de}

\date{\today}

\noindent

\begin{abstract}
Deterministic reaction networks (RNs) are tools to model diverse biological phenomena characterized by particle systems, when there are abundant number of particles. Examples include but are not limited to biochemistry, molecular biology, genetics, epidemiology, and social sciences.
In this chapter we propose a new type of decomposition of RNs, called \emph{fiber decomposition}. Using this decomposition, we establish lifting of mass-action RNs preserving stationary properties, including \emph{multistationarity} and \emph{absolute concentration robustness}. Such lifting scheme is simple and explicit which imposes little restriction on the reaction networks. We provide examples to illustrate how this lifting can be used to construct RNs preserving certain dynamical properties.
\end{abstract}

\keywords{Deterministic reaction networks, fiber decomposition, lifting, multistationarity, absolute concentration robustness.}

\maketitle

\section{Introduction and state of the art}

Reaction networks (RNs) can be regarded as a modelling machinery for many real-world dynamical systems. Examples include networks in epidemiology \cite{B09}, pharmacology \cite{BI09}, ecology \cite{DDK19}, and social sciences \cite{P18}, as well as
 gene regulatory networks \cite{C18},\!  biochemical reaction networks \cite{F19},\! signalling networks \cite{PHPS05},  and metabolic
networks \cite{SZHS12}. An RN is a finite nonempty set of reactions between complexes consisting of species. When there are abundant species and all species are homogeneously well mixed, an RN can be modelled deterministically by ordinary differential equations (ODEs), called \emph{the rate equation}.

\subsection*{Multistationarity of deterministic reaction networks}

A reaction network is \emph{multistationary} if its rate equation admits multiple steady states (subject to the linear subspace the dynamics is confined to) \cite{F19}. Endowed with mass-action kinetics, the rate equation associated with an RN has a polynomial vector field. Hence to determine steady states of an RN amounts to determining zeros of a polynomial, which in general is challenging \cite{F19,FW11}. Several approaches have been proposed to ensure the existence of multiple positive steady states (steady states with positive entries), e.g., based on deficiency theory \cite{HJ72,E98,F19},  injectivity based tests using a Jacobian criterion \cite{CF05,CF06,CF10}, and homotopy and other approaches \cite{CHW08,JS13,CFMW17}.

Lifting of RNs preserving multistationarity is well investigated in the literature \cite{CFRS07,JS13,BP18}. The reason for studying lifting procedures is two-fold. First of all, whether an RN is multistationary might be solved for a smaller/simpler RN and if it is so, then the larger RN of interest is also multistationary by lifting. Secondly, lifting procedures might provide means to construct complex examples of RNs with the same properties as the simpler RN.
The lifting scheme based on the so-called ``atoms of multistationarity'' \cite{JS13} is valid for \emph{fully open continuous-flow stirred-tank reactors} (CFSTRs),\! a network\! in which all chemical species enter the system at constant rates and are removed at rates proportional to their concentrations, that is, there are reactions $0\ce{<=>[][]}\text{S}$ for all species in the RN.

An RN $\cR$ is \emph{nondegenerately} multistationary provided a subnetwork $\widetilde{\cR}\subseteq\cR$ with the same stoichiometric subspace as $\cR$ is so \cite{JS13} (Here ``nondegenerate'' is in the sense of the Jacobian matrix of the vector field). This in particular implies  two networks share the same set of species. The proof  depends on constructing a mapping from every positive steady state (PSS) of the subnetwork $\widetilde{\cR}$ to a nearby point which is a PSS of $\cR$. Based on the construction, the set of PSSs of $\widetilde{\cR}$ is not necessarily a subset of those of $\cR$.

In contrast, our lifting scheme (Theorem~\ref{th-1}) based on the fiber decomposition proposed in this chapter (i) does allow for two RNs to have different sets of species; (ii) under a certain assumption,
 the projection of the set of PSSs of the original reaction network $\cR$ onto the set of species of the subnetwork recovers precisely the set of PSSs of the subnetwork; (iii) the networks are not necessarily CFSTRs.
Nevertheless, the specific construction of lifting does impose certain conditions on the reaction rate constants of the two RNs.

\subsection*{Absolute concentration robustness}

One  interesting property of RNs is  absolute concentration robustness. An RN is \emph{absolute concentration robust} (ACR) if the system has at least one PSS, and all PSSs projected to a given species $\text{S}_i$ are identical (say, $x_i^*$). Such a species $\text{S}_i$ is called an ACR species with $x_i^*$ being the ACR value. Many biological systems have such ACR property, e.g., the EnvZ-OmpR osmoregulatory system, and double-phosphorylation systems of transcriptional regulatory proteins \cite{SF10,CGK20}. This ACR property is closely related to a desirable property in bioengineering, called \emph{robust perfect adaption}, which means that a biological system can adapt after an external stimulus has been applied, and be insensitive to variations in the biochemical parameters of the system \cite{BGK16}.
We remark that ACR is also closely related to sensitivity analysis of parameters for biological systems \cite{SAF09}.

We will show that our lifting procedure also works for ACR. For this we briefly review the literature on ACR. Based on  linear algebra, a simple sufficient condition for a system of deficiency one to be ACR can be  given \cite{SF10} (see Proposition~\ref{prop-SF}). This has recently been extended to
a class of RNs with a weaker condition   than deficiency one \cite{CGK20}. Also lifting of RNs preserving the ACR property  under the former conditions  is discussed therein.
The notion of ACR has likewise been generalized to local ACR and necessary conditions for local ACR have been given \cite{PF20}. An RN is \emph{local ACR} with local ACR species $\text{S}_i$ if the projection of the set of PSSs onto the $i$-th coordinate is nonempty and finite.
We mention that  results for a stochastic analogue of the ACR property are sparse \cite{AEJ14,ACK17,AC19,EK20}.

\section{Notation}

Let $\R$, $\R_{\ge}$, and $\mathbb{Q}$ be the set of real numbers, nonnegative real numbers, and rational numbers, respectively.
Given a finite index set $J\subseteq\N$, for any $x=(x_j)_{j\in J}\in\R^J$, denote $\supp x:=\{j\in J\colon x_j\neq0\}$.

\section{Reaction networks}

In this section, we introduce reaction networks as well as elementary propositions, as prerequisites of the fiber decomposition of reaction networks.

A reaction network (RN) is composed of a triple $(\cS,\cC,\cR)$ of three non-empty finite sets:
\begin{enumerate}
\item[(i)] $\cS=\{\tS_i\}_{1\le i\le d}$ is a set of symbols, termed \emph{species};
\item[(ii)] $\cC\subseteq\N_0^{\cS}$ is a set of linear combinations of species, termed \emph{complexes}, and
\item[(iii)] $\cR\subseteq \cC\times\cC$ is a set of \emph{reactions}. A reaction $(y,y')$ is denoted $y\ce{->[]}y'$. The complex $y$ is called the \emph{reactant} and $y'$ the \emph{product}.
\end{enumerate}

For convention, we assume every species is in some complex and every complex is in some reaction. Hence we also identify an RN with $\cR$ since $\cS$ and $\cC$ can be deduced from $\cR$. We emphasize that $\cR$ is a set without multiplicity, and hence does not contain multiple identical reactions.

A reaction is \emph{degenerate} if its reactant coincides with its product; otherwise it is non-degenerate.
An RN is \emph{degenerate} if it contains degenerate reactions; otherwise, it is \emph{non-degenerate}. Given an RN $\cR$, let $\cR_*\subseteq\cR$ be the subnetwork (with $\cS_*$ and $\cC_*$ being its sets of species and complexes) only consisting of non-degenerate reactions. The concept of an RN herein is more general than the standard one in the literature of chemical reaction network theory (CRNT) \cite{F19} simply because we allow for degenerate reactions.
Other  definitions of RNs, similar to our definition, have also been explored in the literature \cite{DFW17,DWF17}.

The pair $(\cC,\cR)$ forms a (possibly non-simple) digraph referred to as the \emph{reaction graph}. Hence the reaction graph is non-simple and contains a self-loop if and only if there exists a degenerate reaction in the RN. We adopt the convention that every node is strongly connected to itself. Any weakly connected component is called a \emph{linkage class}. All nodes in a strongly connected component (called a \emph{strong linkage class})  are \emph{terminal} if there are no edges from any node in this component to a node in any other strongly connected component. Any node in a terminal strongly connected component is a \emph{terminal complex}; otherwise it is a \emph{non-terminal complex}. Let $\ell_{\cR}$ be the number of linkage classes of $\cR$.

Let $\Omega:=\{y'-y\colon y\ce{->[]}y'\in\cR\}$ be the set of \emph{reaction vectors} and $\Omega_*:=\Omega\setminus\{0\}=\{y'-y\colon y\ce{->[]}y'\in\cR_*\}$. Let $\sS:=\{\sum_{\omega\in\Omega}c_{\omega}\omega\colon c_{\omega}\in\R\}$ be the span of $\Omega$ over $\R$, termed the \emph{stoichiometric subspace} of $\cR$. Note that $\sS=\sS_*:=\{\sum_{\omega\in\Omega_*}c_{\omega}\omega\colon c_{\omega}\in\R\}$ and $\dim\sS\le d$. Define the \emph{deficiency} of a reaction network $\cR$:
\[\delta=\#\cC_*-\ell_{\cR_*}-\dim\sS.\]
Then $\cR$ and $\cR_*$ share the same deficiency. For every $c\in\R^{\cS}_{\ge0}$, let $\sS_c\:=(\sS+c)\cap\R^{\cS}_{\ge0}$ be the \emph{stoichiometric compatibility class} through $c$.

Given two RNs $\cR^1$ and $\cR^2$.
Let $y\ce{->[]}y'\in\cR^1$ and $\widetilde{y}\ce{->[]}\widetilde{y}'\in\cR^2$. We say $y\ce{->[]}y'$ is representable by $\widetilde{y}\ce{->[]}\widetilde{y}'$ and denoted $\{y\ce{->[]}y'\}\preceq \{\widetilde{y}\ce{->[]}\widetilde{y}'\}$ if
\begin{itemize}
\item[i)] $y'-y$ is a multiple of $\widetilde{y}'-\widetilde{y}$, that is, there exists $r\in\R$ such that $y'-y=r(\widetilde{y}'-\widetilde{y})$,
\item[ii)]  $\supp\widetilde{y}\subseteq\supp y$.
\end{itemize}
A subnetwork $\mathcal{A}\subseteq\cR^1$ is representable by $\widetilde{y}\ce{->[]}\widetilde{y}'$, denoted $\mathcal{A}\preceq\widetilde{y}\ce{->[]}\widetilde{y}'$ if every reaction in $\mathcal{A}$ is so. $\cR^1$ is representable by $\cR^2$, denoted by $\cR^1\preceq\cR^2$ if every reaction in $\cR^1$ is representable by one reaction in $\cR^2$.

A\ \, \emph{deterministic\ reaction\ network}\  \, is\ \, a\ \, pair\ \, consisting\ \, of\ \, an\ \, RN\ \, and\ \, a\ \, \emph{kinetics}\ \ $\mathcal{K}=(\lambda_{y\ce{->[]}y'})_{y\ce{->[]}y'\in\cR}$, where $\lambda_{y\ce{->[]}y'}\colon \R^{\cS}_{\ge0}\to\R_{\ge0}$ is the \emph{rate function} of $y\ce{->[]}y'$, expressing the propensity of the reaction to occur. A special kinetics is   \emph{mass-action kinetics},
\begin{equation}\label{mass-action}
\lambda_{y\to y'}(x)=\kappa_{y\ce{->[]}y'}x^y\colon=\kappa_{y\ce{->[]}y'}\prod_{i=1}^dx_i^{y_i},
\end{equation}
where $\kappa_{y\ce{->[]}y'}$ is referred to as the \emph{reaction rate constant}.  Hence, $\lambda_{y\ce{->[]}y'}(x)>0$ if and only if $\supp x\supseteq \supp y$ for $x\in\R^{\cS}$.

For the ease of exposition rather than for generality, we assume throughout that all RNs are endowed with  mass-action kinetics, and hence we also use $\cR$ to refer to the RN with   mass-action kinetics.

The \emph{rate equation} for  a deterministic RN $\cR$ as well as for the corresponding $\cR^*$, characterizing the change in species concentrations over time is then given by the ODE system
\begin{equation}\label{Eq-1}
  \dot{x}=\sum_{y\ce{->[]}y'\in\cR^*}\lambda_{y\ce{->[]}y'}(x)(y'-y).
\end{equation}
Hence for every $c\in\R_0^{\cS}$, $\sS_c$ is an invariant subspace under the flow generated by \eqref{Eq-1}.

\section{Fiber decomposition of RNs}

In this section, we define a fiber decomposition of an RN and use it to construct an explicite lifting scheme from one RN (\emph{reference reaction network}) to another ``larger'' RN (with more species, complexes, and/or reactions) while preserving various stationary properties, including  multistationarity and ACR propery. We mention that converse to lifting, reduction of RNs can be derived \emph{mutatis mutandis}. Hence the main results can be potentially used to simplify large networks in biochemistry and synthetic biology.

\subsection*{Reference RN and base RN}

Given an RN $(\cS,\cC,\cR)$, let $\cS^1\sqcup\cS^2=\cS$ be a partition of $\cS$ into two disjoint sets $\cS^1$ and $\cS^2$. We refer to $\cS^1$ as the \emph{reference} subset of species. Let $\bP_i$ is the natural projection from $\R^{\cS}$ onto $\R^{\cS^i}$ for $i=1,2$. Hence for $i=1,2$, for  $y\in\cC$, $\bP_iy=0$ if $\supp y\cap\cS^i=\varnothing$, and for  $y\to y'\in\cR$, $\bP_i(y\to y')=\bP_iy\to\bP_iy'$ defines a reaction confined to the species set $\cS^i$. Furthermore, let $\bP_i\cR=\{\bP_iy\to\bP_iy'\colon y\to y'\in\cR\}$ (without multiplicity) and $\bP_i\cC=\{\bP_iy\colon y\in\cC\}$.

Hence $(\cS^1,\cC^{\sf b},\cR^{\sf b})$ with $\cC^{\sf b}=\bP_1\cC$ and $\cR^{\sf b}=\bP_1\cR$ forms a new RN, called the \emph{base reaction network} (BRN) of $\cR$, denoted $\cR^{\sf b}$.
Let $\cR^{\sf b}_*\subseteq\cR^{\sf b}$ be the subset of non-degenerate reactions.

In addition to $\cR$, we consider another mass-action RN, termed the \emph{reference} RN, given as $\cR^{\circ}=\{y_k\to y_k'\}_{k\in\sK^{\circ}}$, where  $K_\circ$ is an index set.
Let $\cR^{\circ}_*\subseteq\cR^{\circ}$ be the RN consisting of the non-degenerate reactions of $\cR^{\circ}$ and  $\sK^{\circ}_*\subseteq K^\circ$  its index set.

Now we are ready to come up with a decomposition of $\cR$ w.r.t. the reference RN $\cR^{\circ}$. Assume

\medskip
\noindent{($\mathbf{A1}$)}  $\cR^{\circ}_*\neq\varnothing$.

\medskip
\noindent ($\mathbf{A2}$)  $\cR=\sqcup_{k\in\cK^{\circ}}\cR_{k}$ is a partition in disjoint RNs, such that  for  $k\in\sK^{\circ}_*$, $\bP_1\cR_{k}\preceq y_k\to y_k'$, and  for  $k\in\sK^{\circ}\setminus\sK^{\circ}_*$, $\bP_1\cR_{k}\subseteq\cR^{\sf b}$ consists of degenerate reactions.

\medskip

For $y\to y'\in\cR_{k}$, $k\in\sK^{\circ}_*$, let \begin{equation}\label{Eq-4}
a^{y\to y'}_{k} (y_k'-y_k)=\bP_1(y'-y),\quad a^{y\to y'}_{k}\in\mathbb{Q}.
\end{equation}
Assumption ($\mathbf{A2}$) implies $\cR^{\sf b}\preceq\cR^{\circ}$.
By definition of representability, a reaction in $\cR^{\sf b}_*$ can only be representable by a reaction in $\cR^{\circ}_*$. Moreover, if $\cR^{\circ}$ is non-degenerate, then so are $\cR^{\sf b}$ and $\cR$.

Nevertheless, a BRN of a non-degenerate RN can be degenerate.

\begin{example}
  Consider the non-degenerate RN
   \[\tS_1+\tS_2\ce{->[]}\tS_1.\]
    Let $\cS^1=\{\tS_1\}$. Then its BRN $\{\tS_1\to \tS_1\}$ is an RN consisting of a degenerate reaction. Hence $\cR^{\circ}$ must be degenerate.
\end{example}

A reaction in $\cR^{\sf b}$ can be representable by more than one reaction in $\cR^{\circ}$.
\begin{example}\label{ex-2}
Consider the RN $\cR$:
\[\tS_1+\tS_2+\tS_3\ce{->[]}2\tS_1+\tS_2+\tS_3,\quad 2\tS_1+\tS_2+2\tS_3\ce{->[]}3\tS_1+\tS_2+3\tS_3.\]
Let $\cR^{\circ}$: \[\tS_1+\tS_2\ce{->[]}2\tS_1+\tS_2,\quad 2\tS_1+\tS_2\ce{->[]}3\tS_1+\tS_2.\]
Hence $\cR^{\sf b}=\cR^{\circ}$. Since either reaction in $\cR^{\circ}$ is representable by the other, then either reaction in $\cR^{\sf b}$ is representable by either reaction in $\cR^{\circ}$.
\end{example}

We further emphasize that such a decomposition given in ($\mathbf{A2}$) may not be unique.
\begin{example}
Revisit Example~\ref{ex-2}. Then
 \begin{align*}
  \cR=&\cR_{1}\sqcup\cR_{2},
  \end{align*}
  gives two decompositions with either
  $$\cR_{1}=\{\tS_1+\tS_2+\tS_3\to 2\tS_1+\tS_2+\tS_3\},\ \cR_{2}=\{2\tS_1+\tS_2+2\tS_3\to 3\tS_1+\tS_2+3\tS_3\}$$
  or
  $$ \cR_{1}=\{2\tS_1+\tS_2+2\tS_3\to 3\tS_1+\tS_2+3\tS_3\},\ \cR_{2}=\{\tS_1+\tS_2+\tS_3\to 2\tS_1+\tS_2+\tS_3\},$$
   where $y_1\to y_1'=\tS_1+\tS_2\to 2\tS_1+\tS_2$ and $y_2\to y_2'=2\tS_1+\tS_2\to 3\tS_1+\tS_2$.
\end{example}

We remark that decompositions of reaction networks are proposed in other contexts \cite{HH10,GHMS20,H19}, e.g., for the pursuit of explicit formulas of stationary distributions \cite{H19}.

\subsection*{Fiber decomposition}

  For a reaction $y\to y\in\cR$, we write $\bP_1(y\to y')\oplus \bP_2(y\to y')$ for the direct sum decomposition  $y\to y'=\bP_1y\oplus\bP_2y\to\bP_1y'\oplus \bP_2y'$.

With a decomposition as in ($\mathbf{A2}$) specified, define the associated \emph{fiber  reaction network} (FRN)  $(\cS^2, \cC_{k}, \cR^{{\sf f}}_{k})$ on $\cS^2$
at every reaction $y_k\to y'_k\in\cR^{\circ}$ as
$$\cC_{k}:=\underset{y\to y'\in\cR_{k}}{\cup}\{\bP_2y,\bP_2y'\},\quad \cR^{{\sf f}}_{k}=\bP_2\cR_{k}.$$
Let $\cR^{{\sf f}}_{k,*}\subseteq\cR^{{\sf f}}_{k}$ be the subset of non-degenerate reactions.

Let $\cR^{\sf b}_{k}=\bP_1\cR_{k}$. Hence $\cR^{\sf b}=\cup_{k\in\sK^{\circ}}\cR^{\sf b}_{k}$.
Recall by the definition of the decomposition in ($\mathbf{A2}$), for every $k\in\sK^{\circ}_*$, $\cR^{\sf b}_{k}\preceq y_\to y_k'$.
Note that for two different reactions  $y_k\to y_k', y_{k'}\to y_{k'}'\in \cR^{\circ}$, $\cR^{\sf b}_{k}$ and $\cR^{\sf b}_{k'}$ may have a non-empty intersection or even coincide. Similarly,
$\cR^{\sf f}_{k}$ and $\cR^{\sf f}_{k'}$ may also have a non-empty intersection or  coincide.
Moreover, $\cR$ is non-degenerate if either (i) all FRNs are so or (ii) $\cR^{\sf b}$ is so.

Finally we remark that depending on the choice of $\cS^1$, an RN can have different fiber decompositions in terms of the BRNs together with the FRNs.

\begin{example}
  Consider the RN $\cR$:
  \[\tS_1+\tS_2\ce{->[]}\tS_2+\tS_3\ce{->[]}\tS_1+\tS_3\ce{->[]}\tS_1+\tS_2.\]

(i) Let $\cS^1=\{\tS_1\}$ and $\cR^{\circ}=\{\tS_1\to 0, 0\to \tS_1, \tS_1\to \tS_1\}$. Label  the three reactions in $\cR^{\circ}$ by 1-3 in the given order. Then $\cR^{\sf b}=\cR^{\circ}$ is degenerate. There exists a unique decomposition (irrespective of the reaction rate constants) satisfying ($\mathbf{A2}$) with $\cR_{1}=\{\tS_1+\tS_2\to\tS_2+\tS_3\}$, $\cR_{2}=\{\tS_2+\tS_3\to\tS_1+\tS_3\}$, and $\cR_{3}=\{\tS_1+\tS_3\to\tS_1+\tS_2\}$. Hence the FRNs $\cR_{1}^{\sf f}=\{\tS_2\to\tS_2+\tS_3\}$, $\cR_{2}^{\sf f}=\{\tS_2+\tS_3\to\tS_3\}$, and $\cR_{3}^{\sf f}=\{\tS_3\to\tS_2\}$ are all non-degenerate.

    (ii)  Let $\cS^1=\{\tS_1,\tS_2\}$ and $\cR^{\circ}=\{\tS_1+\tS_2\to\tS_2, \tS_2\to\tS_1,  \tS_1\to\tS_1+\tS_2\}$. Label  the three reactions in $\cR^{\circ}$ by 1-3 in the given order. Hence $\cR^{\sf b}=\cR^{\circ}$ is non-degenerate. However, with a unique decomposition satisfying ($\mathbf{A2}$), $\cR_{1}^{\sf f}=\{0\to\tS_3\}$, $\cR_{2}^{\sf f}=\{\tS_3\to\tS_3\}$, and $\cR_{3}^{\sf f}=\{\tS_3\to0\}$ are \emph{not} all non-degenerate.
\end{example}

\section{Lifting of reaction networks}

Using the setup in the above two sections, we will construct a \emph{larger} RN from a \emph{smaller} RN (the reference RN), so that the set of \emph{positive steady states} of the larger RN projected onto the species set of the smaller one coincides with the set of PSSs of the latter. Such reference RN plays a role as the core module of the larger RN. Specifically, in terms of a fiber decomposition of an RN, we look for an RN of the set $E$ of PSSs with a prescribed reference RN of the set $E^{\circ}$ of PSSs such that $\varnothing\neq E\subseteq E^{\circ}\oplus\R_{>0}^{\cS^2}$.

Based on the fiber decomposition of an RN, given a reference RN $\cR^{\circ}$, there will be diverse ways to construct an RN $\cR$ with $\cR^{\circ}$ as its prescribed BRN. In the following, we propose several ways to construct $\cR$ preserving the aforementioned stationary property.

Recall that we assume mass-action kinetics. For $y\to y'\in\cR$, let $\kappa_{y\to y'}$ denote the corresponding rate constant, and for  $y_k\to y_k'\in\cR^{\circ}$, let $\kappa_{k}^{\circ}$ denote the corresponding rate constant.

Assume

\medskip
\noindent{($\mathbf{A3}$)} $\sum_{y\to y'\in\cR_{k,*}}a^{y\to y'}_{k}
\frac{\kappa_{y\to y'}}{\kappa_{k}^{\circ}}w^{\bP_1y-y_k}v^{\bP_2y}$ is independent of $k\in\sK^{\circ}_*$.

\medskip

Recall that mass-action kinetics of an RN is determined only by the reactants and the reaction rate constants. Hence there is no restriction on the products of $\cR$ (or equivalently, those of the FRNs). Assumption ($\mathbf{A3}$) guarantees that the $\bP_1$-projection of the set $E$ of PSSs of $\cR$ is a subset of the set $E^{\circ}$ of PSSs of $\cR^{\circ}$, due to \eqref{Eq-4}.

In the light of ($\mathbf{A3}$), assume

\medskip
\noindent{($\mathbf{A4}$)} there exists $w\oplus v\in E^{\circ}\oplus\R^{\cS^2}_{>0}$ such that $$\sum_{k\in\sK^{\circ}}\sum_{y\to y'\in\cR_{k,*}}\kappa_{y\to y'}w^{\bP_1y}v^{\bP_2y}\bP_2(y'-y)=0.$$

\medskip
This assumption guarantees that $\varnothing\neq E\subseteq E^{\circ}\oplus\R^{\cS^2}_{>0}$.
It is readily verified that the following assumption  implies ($\mathbf{A4}$) and ensures $E=E^{\circ}\oplus\R^{\cS^2}_{>0}$.

\medskip
\noindent{($\mathbf{A5}$)} There exists a mapping $\sigma\colon \cS^2\to\cS^1$ such that for  $k\in\sK^{\circ}$ and  $y\to y'\in\cR_{k}$, there exists  $b^{y\to y'}_{k}=(b_{k,i}^{y\to y'})_{i\in\cS^2}\in(\mathbb{Q}\setminus\{0\})^{\cS^2}$ such that for  $i\in\cS^2$,
$$\bP_2(y'_i-y_i)=b_{k,i}^{y\to y'}((y_k')_{\sigma(i)}-(y_k)_{\sigma(i)})$$ and $$\sum_{y\to y'\in\cR_{k,*}}b^{y\to y'}_{k,i}
\frac{\kappa_{y\to y'}}{\kappa_{k}^{\circ}}w^{\bP_1y-y_k}v^{\bP_2y}$$ is independent of $k\in\sK^{\circ}_*.$
\medskip

From ($\mathbf{A5}$) it follows that $\cR_{k,*}=\varnothing$ for all $k\in\sK^{\circ}\setminus\sK^{\circ}_*$. In other words, all FRNs at degenerate reactions in the reference RN consist of degenerate reactions.

\begin{example}
  Consider the mass-action RN $\cR^{\circ}$:
  \[\tS_1\ce{<=>[\kappa_1^{\circ}][\kappa_2^{\circ}]}2\tS_1\] with $\sK^{\circ}=\sK^{\circ}_*=\{1,2\}$ consistent with the indices of the reaction rate constants. Hence ($\mathbf{A1}$) is satisfied.
  Consider its lifting $\cR$:
    \[2\tS_1+3\tS_2\ce{->[\kappa_1^{\circ}]}3\tS_1+2\tS_2,\quad 3\tS_1+3\tS_2\ce{->[\kappa_2^{\circ}/2]}\tS_1+5\tS_2,\]
where the labels $\kappa_1^{\circ}$ and $\kappa_2^{\circ}/2$ over the arrows are the rate constants associated with the reactions, which are ordered by the indices of $\kappa_i^{\circ}$, $i=1,2$. Let $\cR=\cR_1\sqcup\cR_2$ with $\cR_i$ composed of the $k$-th reaction of $\cR$ for $k=1,2$. Hence for $k=1,2$ and $y\to y'\in\cR_k$, $a_k^{y\to y'}=k$ and $\bP_1 y-y_k=1$ and $\bP_2y=3$. Hence $\bP_1\cR_{k}\preceq y_k\to y_k'$ for $k=1,2$, and ($\mathbf{A2}$) is satisfied. Moreover, ($\mathbf{A3}$) is satisfied with
  $$\sum_{y\to y'\in\cR_{k,*}}a^{y\to y'}_{k}
\frac{\kappa_{y\to y'}}{\kappa_{k}^{\circ}}w^{\bP_1y-y_k}v^{\bP_2y}=wv^2,\quad \text{for}\ k=1,2.$$
In addition, $\sigma(2)=1$, and $b_{k,2}^{y\to y'}=-k$. Hence  ($\mathbf{A5}$) is satisfied with
$$\sum_{y\to y'\in\cR_{k,*}}b^{y\to y'}_{k,i}
\frac{\kappa_{y \to y'}}{\kappa_{k}^{\circ}}w^{\bP_1y-y_k}v^{\bP_2y}=-wv^2.$$
It is easy to verify that $E^{\circ}=\{\kappa_2^{\circ}/\kappa_1^{\circ}\}$ and $E=\{\kappa_2^{\circ}/\kappa_1^{\circ}\}\oplus\R_{>0}$.
\end{example}

We remark that one can make  a more general assumption than \noindent{($\mathbf{A3}$)}, similar to \noindent{($\mathbf{A5}$)}, by assuming the independence coordinate-wise. Such an assumption, however, will sacrifice the structure of the reference RN $\cR^{\circ}$ as a core module of $\cR$.

\begin{theorem}\label{th-1}
Given a non-degenerate mass-action RN $\cR$. Let $E$ and $E^\circ$ be the set of PSSs of $\cR$ and $\cR^{\circ}$, respectively. Assume ($\mathbf{A1}$)-($\mathbf{A4}$). Then $E\subseteq E^{\circ}\oplus\R^{\cS^2}_{>0}\neq\varnothing$. In particular, assume ($\mathbf{A5}$) additionally, then $E=E^{\circ}\oplus\R^{\cS^2}_{>0}$.
\end{theorem}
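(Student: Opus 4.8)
The plan is to read off both inclusions from the two projections $\bP_1$ and $\bP_2$ of the rate equation \eqref{Eq-1}, using the fiber decomposition to rewrite each projected sum in terms of the reference network $\cR^{\circ}$. Throughout, a point $x\in\R^{\cS}_{>0}$ is written $x=w\oplus v$ with $w=\bP_1x\in\R^{\cS^1}_{>0}$ and $v=\bP_2x\in\R^{\cS^2}_{>0}$, so that $x^y=w^{\bP_1y}v^{\bP_2y}$ for every complex $y$. Splitting $\cR_*=\sqcup_{k\in\sK^{\circ}}\cR_{k,*}$ according to ($\mathbf{A2}$), the condition that $x$ be a steady state reads
\[
\sum_{k\in\sK^{\circ}}\sum_{y\to y'\in\cR_{k,*}}\kappa_{y\to y'}\,w^{\bP_1y}v^{\bP_2y}\,(y'-y)=0 .
\]
Applying $\bP_1$ and $\bP_2$ separately, the whole proof reduces to analysing the two resulting equations.

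First I would prove $E\subseteq E^{\circ}\oplus\R^{\cS^2}_{>0}$. Fix $x=w\oplus v\in E$; since $x>0$ it suffices to show $w\in E^{\circ}$. Apply $\bP_1$ to the displayed identity. For $k\in\sK^{\circ}\setminus\sK^{\circ}_*$ the base reactions are degenerate, so $\bP_1(y'-y)=0$ and these blocks drop out; for $k\in\sK^{\circ}_*$, \eqref{Eq-4} gives $\bP_1(y'-y)=a_{k}^{y\to y'}(y_k'-y_k)$. Collecting terms and factoring out $\kappa_{k}^{\circ}w^{y_k}$ turns the $k$-th block into
\[
\kappa_{k}^{\circ}w^{y_k}\Bigl(\sum_{y\to y'\in\cR_{k,*}}a_{k}^{y\to y'}\,\tfrac{\kappa_{y\to y'}}{\kappa_{k}^{\circ}}\,w^{\bP_1y-y_k}v^{\bP_2y}\Bigr)(y_k'-y_k),
\]
and by ($\mathbf{A3}$) the parenthesised sum is a single value $c(w,v)$ independent of $k$. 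Hence $\bP_1$ of the rate equation equals $c(w,v)\sum_{k\in\sK^{\circ}_*}\kappa_{k}^{\circ}w^{y_k}(y_k'-y_k)$, i.e. $c(w,v)$ times the reference vector field evaluated at $w$. Since this expression vanishes and, as discussed below, $c(w,v)\neq 0$ at the positive point $x$, the rate equation of $\cR^{\circ}$ holds at $w$; hence $w\in E^{\circ}$ and $x\in E^{\circ}\oplus\R^{\cS^2}_{>0}$.

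Non-emptiness and the reverse inclusion come from the $\bP_2$-projection. Under ($\mathbf{A4}$) the chosen $w\oplus v$ has $w\in E^{\circ}$, so by the computation above the $\bP_1$-component of the rate equation at $w\oplus v$ equals $c(w,v)\cdot 0=0$, while ($\mathbf{A4}$) is precisely the vanishing of its $\bP_2$-component; hence $w\oplus v\in E$ and $E\neq\varnothing$. For the equality under ($\mathbf{A5}$), I would show that every $w\oplus v\in E^{\circ}\oplus\R^{\cS^2}_{>0}$ lies in $E$ by checking the $\bP_2$-equation coordinatewise. Using $\bigl(\bP_2(y'-y)\bigr)_i=b_{k,i}^{y\to y'}\bigl((y_k')_{\sigma(i)}-(y_k)_{\sigma(i)}\bigr)$ and factoring $\kappa_{k}^{\circ}w^{y_k}$ as before, the $i$-th coordinate of the $\bP_2$-component becomes $d_i(w,v)\sum_{k\in\sK^{\circ}_*}\kappa_{k}^{\circ}w^{y_k}\bigl((y_k')_{\sigma(i)}-(y_k)_{\sigma(i)}\bigr)$, where $d_i(w,v)$ is the $k$-independent value supplied by ($\mathbf{A5}$) (blocks with $k\notin\sK^{\circ}_*$ vanish because $\cR_{k,*}=\varnothing$). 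This is $d_i(w,v)$ times the $\sigma(i)$-coordinate of the reference vector field at $w$, which is $0$ since $w\in E^{\circ}$. Thus the $\bP_2$-component vanishes, the $\bP_1$-component vanishes by ($\mathbf{A3}$), and $w\oplus v\in E$; combined with the first inclusion this gives $E=E^{\circ}\oplus\R^{\cS^2}_{>0}$.

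The main obstacle is the two-layer bookkeeping that makes these factorisations legitimate: one must track which reactions survive each projection (base-degenerate blocks $k\in\sK^{\circ}\setminus\sK^{\circ}_*$ and the freedom in the fiber products contribute to $\bP_2$ but not to $\bP_1$), group the survivors by $k\in\sK^{\circ}_*$, and check that the monomial $\kappa_{k}^{\circ}w^{y_k}$ factors out cleanly so that ($\mathbf{A3}$) and ($\mathbf{A5}$) collapse the inner sums to $k$-independent scalars, leaving exactly the reference vector field (respectively its $\sigma(i)$-coordinate). The one genuinely delicate point is the deduction $w\in E^{\circ}$ in the first inclusion, which needs the scalar $c(w,v)$ to be nonzero at the positive steady state; this is where the positivity of $w,v$ together with the specific form imposed by ($\mathbf{A3}$) must be invoked.
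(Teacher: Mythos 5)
Your proposal follows the paper's proof essentially step for step: you split $x=w\oplus v$, rewrite the rate equation exactly as in \eqref{Eq-10}--\eqref{Eq-8}, use \eqref{Eq-4} together with ($\mathbf{A3}$) to factor the $\bP_1$-component into a $k$-independent scalar $c(w,v)$ times the vector field of $\cR^{\circ}$ evaluated at $w$, obtain $E\neq\varnothing$ from ($\mathbf{A4}$), and prove the reverse inclusion under ($\mathbf{A5}$) by the identical coordinatewise factorization through $\sigma(i)$ (including the observation that $\cR_{k,*}=\varnothing$ for $k\in\sK^{\circ}\setminus\sK^{\circ}_*$). Those parts are correct and match the paper.

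The gap is the step you defer and never supply: to get $E\subseteq E^{\circ}\oplus\R^{\cS^2}_{>0}$ you must conclude that the reference vector field vanishes at $w$ from $c(w,v)\cdot\sum_{k\in\sK^{\circ}_*}\kappa^{\circ}_{k}w^{y_k}(y_k'-y_k)=0$, and you assert that $c(w,v)\neq0$ follows from positivity of $w,v$ ``together with the specific form imposed by ($\mathbf{A3}$)''. No such argument exists: ($\mathbf{A3}$) demands only $k$-independence, and the coefficients $a^{y\to y'}_{k}$ from \eqref{Eq-4} may have mixed signs, so $c$ can vanish on a large subset of the positive orthant. Concretely, take $\cR^{\circ}=\{\tS_1\to2\tS_1,\ 2\tS_1\to\tS_1\}$ and $\cR=\{\tS_1+\tS_2\to2\tS_1+\tS_2,\ \tS_1+2\tS_2\to2\tS_2,\ 2\tS_1+\tS_2\to\tS_1+\tS_2,\ 2\tS_1+2\tS_2\to3\tS_1+2\tS_2\}$ with all rate constants equal to $1$, and $\cR_1$ (resp.\ $\cR_2$) the first (resp.\ last) two reactions. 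Then ($\mathbf{A1}$)--($\mathbf{A4}$) hold with $c(w,v)=v-v^2$, but the rate equation is $\dot w=wv(1-w)(1-v)$, $\dot v\equiv0$, so $E$ contains the whole line $\{(w,1)\colon w>0\}$, whereas $E^{\circ}\oplus\R_{>0}=\{1\}\times\R_{>0}$; the claimed inclusion fails. To be fair, this is not a defect of your write-up alone: the paper's proof makes the same jump silently right after \eqref{Eq-11}, asserting that ($\mathbf{A3}$) and ($\mathbf{A4}$) imply the inclusion, so you have put your finger on a point where the stated hypotheses are genuinely too weak. Closing the gap requires an additional assumption that the common sum in ($\mathbf{A3}$) is non-vanishing on the positive orthant (for instance, all $a^{y\to y'}_{k}$ of one sign), in the spirit of the explicit non-vanishing requirement built into ($\mathbf{A7}$).
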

\begin{proof}

Rewrite \eqref{Eq-1} as
\begin{equation}\label{Eq-10}
  \dot{x}=\sum_{k\in\sK^{\circ}}\sum_{y\to y'\in\cR_{k\in\sK}}\kappa_{y\to y'}x^y(y'-y).
\end{equation}

Let $x=w\oplus v$ with $w\in\R^{\cS^1}_{\ge0}$ and $v\in\cR^{S^2}_{\ge0}$. In the light of \eqref{Eq-4} and
\begin{equation}\label{Eq-5}
y_k'-y_k=0,\quad \text{for all}\quad k\in\sK^{\circ}\setminus\sK^{\circ}_*,
\end{equation}
rewrite \eqref{Eq-10} as
\begin{equation}\label{Eq-11}
 \begin{split}
  \dot{w}=&\sum_{k\in\sK^{\circ}_*}\kappa_{k}^{\circ}w^{y_k}(y'_k-y_k)\cdot\sum_{y\to y'\in\cR_{k}}
  \frac{\kappa_{y\to y'}}{\kappa_{k}^{\circ}}a_{k}^{y\to y'}w^{\bP_1y-y_k}v^{\bP_2y}
  \end{split}
  \end{equation}
  \begin{equation}\label{Eq-8}
 \begin{split}
  \dot{v}=&\sum_{k\in\sK^{\circ}}\kappa_{k}^{\circ}w^{y_k}\cdot\sum_{y\to y'\in\cR_{k}}
  \frac{\kappa_{y\to y'}}{\kappa_{k}^{\circ}}w^{\bP_1y-y_k}v^{\bP_2y}(\bP_2y'-\bP_2y).
\end{split}
\end{equation}
From \eqref{Eq-11} it follows that ($\mathbf{A3}$) and ($\mathbf{A4}$) together imply that $E\subseteq E^{\circ}\oplus\R^{\cS^2}_{>0}\neq\varnothing$.  

Now assume ($\mathbf{A5}$) additionally. For each $i\in\cS^2$,
\begin{equation*}
 \begin{split}
  \dot{v}_i=&\sum_{k\in\sK^{\circ}_*}\kappa_{k}^{\circ}w^{y_k}((y_k')_{\sigma(i)}-(y_k)_{\sigma(i)})
  \sum_{y\to y'\in\cR_{k,*}}b^{y\to y'}_{k,i}\frac{\kappa_{y\to y'}}{\kappa^{\circ}_{k}}
  w^{\bP_1y-y_k}v^{\bP_2y}\\
  =& \sum_{y\to y'\in\cR_{k,*}}b^{y\to y'}_{k,i}\frac{\kappa_{y\to y'}}{\kappa^{\circ}_{k}}
  w^{\bP_1y-y_k}v^{\bP_2y}\sum_{k\in\sK^{\circ}_*}\kappa_{k}^{\circ}w^{y_k}((y_k')_{\sigma(i)}-(y_k)_{\sigma(i)}).
  \end{split}
  \end{equation*}
     Therefore for all $x^*\in E^{\circ}\oplus\R^{\cS^2}_{>0}\neq\varnothing$, $x^*$ is a PSS for $\cR$. By (i), $E=E^{\circ}\oplus\R^{\cS^2}_{>0}$.
\end{proof}

We propose  more checkable assumptions than ($\mathbf{A3}$) and ($\mathbf{A5}$).

\medskip
\noindent{($\mathbf{A6}$)} The sets $\cC_1=\{\bP_1y-y_k\colon y\to y'\in\cR_{k}\}$ and $\cC_2=\{\bP_2y\colon y\to y'\in\cR_{k}\}$ do not depend on  $k\in\sK^{\circ}$.

\medskip
By ($\mathbf{A6}$), for all $k\in\sK^{\circ}$, $\cR_{k}$ can be decomposed as:
\[\cR_{k}=\sqcup_{z_1\in\cC_1,z_2\in\cC_2}\cR_{k,z_1,z_2},\]
where $\cR_{k,z_1,z_2}=\{y\to y'\in\cR_{k}\colon \bP_1y=z_1+y_k,\ \bP_2y=z_2\}$.

\medskip
\noindent{($\mathbf{A7}$)} There exists a mapping $\sigma\colon \cS^2\to\cS^1$ such that for $k\in\sK^{\circ}$ and $y\to y'\in\cR_{k}$, there exists  $b_{k}^{y\to y'}=(b_{k,i}^{y\to y'})_{i\in\cS^2}\in(\mathbb{Q}\setminus\{0\})^{\cS^2}$ such that for  $i\in\cS^2$,
$$\bP_2(y'-y)_i=b_{k,i}^{y\to y'}((y_k')_{\sigma(i)}-(y_k)_{\sigma(i)}),$$
 both $\sum_{y\to y'\in\cR_{k,z_1,z_2}}a^{y\to y'}_{k}\frac{\kappa_{y\to y'}}{\kappa^{\circ}_{k}}$ and $\sum_{y\to y'\in\cR_{k,z_1,z_2}}b^{y\to y'}_{k,i}\frac{\kappa_{y\to y'}}{\kappa^{\circ}_{k}}$ are non-zero and independent of $k\in\sK^{\circ}_*$.

\begin{theorem}\label{th-2}
Let $\cR$ be a non-degenerate mass-action RN. Let $E$ and $E^{\circ}$ be the set of PSSs of $\cR$ and $\cR^{\circ}$, respectively. Assume ($\mathbf{A1}$)-($\mathbf{A2}$) and ($\mathbf{A6}$)-($\mathbf{A7}$). Then $E=E^{\circ}\oplus\R^{\cS^2}_{>0}$.
\end{theorem}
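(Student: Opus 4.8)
The plan is to read Theorem~\ref{th-2} as a \emph{checkable} refinement of the (A5)-part of Theorem~\ref{th-1}: I would show that (A6)--(A7) force both (A3) and (A5), and then simply invoke Theorem~\ref{th-1}, recalling that (A5) implies (A4) and that (A1)--(A2) are already assumed. Thus the entire content is the implication (A6)$\,\wedge\,$(A7)$\,\Rightarrow\,$(A3)$\,\wedge\,$(A5).

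The mechanism is the block refinement $\cR_{k}=\sqcup_{z_1\in\cC_1,z_2\in\cC_2}\cR_{k,z_1,z_2}$ furnished by (A6). Its virtue is that along each block the monomial $w^{\bP_1y-y_k}v^{\bP_2y}$ collapses to $w^{z_1}v^{z_2}$, a quantity that is constant on the block and, decisively, indexed by a pair $(z_1,z_2)$ ranging over the \emph{same} sets $\cC_1,\cC_2$ for every $k$. Regrouping the sum in (A3) accordingly, I would write
\begin{equation*}
\sum_{y\to y'\in\cR_{k,*}}a^{y\to y'}_{k}\frac{\kappa_{y\to y'}}{\kappa_{k}^{\circ}}w^{\bP_1y-y_k}v^{\bP_2y}
=\sum_{z_1\in\cC_1,\,z_2\in\cC_2}w^{z_1}v^{z_2}\!\!\sum_{y\to y'\in\cR_{k,z_1,z_2}}\!\!a^{y\to y'}_{k}\frac{\kappa_{y\to y'}}{\kappa_{k}^{\circ}},
\end{equation*}
where I have used that every degenerate reaction has $\bP_1(y'-y)=0$, hence $a^{y\to y'}_{k}=0$ by \eqref{Eq-4} (as $y_k'\neq y_k$ for $k\in\sK^{\circ}_*$), so that enlarging the inner sum from $\cR_{k,z_1,z_2}\cap\cR_{k,*}$ to all of $\cR_{k,z_1,z_2}$ changes nothing. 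Now (A7) asserts exactly that each inner coefficient is independent of $k\in\sK^{\circ}_*$, and the monomials carry no $k$-dependence, so the whole expression is $k$-independent: this is (A3). Running the identical regrouping with $b^{y\to y'}_{k,i}$ in place of $a^{y\to y'}_{k}$, and carrying along the relation $\bP_2(y'-y)_i=b^{y\to y'}_{k,i}((y_k')_{\sigma(i)}-(y_k)_{\sigma(i)})$ supplied by (A7), establishes the $k$-independence demanded by (A5).

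Before quoting Theorem~\ref{th-1} I would clear away the degenerate reference reactions. If $k\in\sK^{\circ}\setminus\sK^{\circ}_*$ then $y_k'=y_k$, so the displayed relation gives $\bP_2(y'-y)_i=b^{y\to y'}_{k,i}\cdot 0=0$ for all $i$, while (A2) gives $\bP_1y'=\bP_1y$; hence every reaction of $\cR_{k}$ is degenerate and $\cR_{k,*}=\varnothing$, in agreement with the remark after (A5). With (A3) and (A5) verified, Theorem~\ref{th-1} delivers $E=E^{\circ}\oplus\R^{\cS^2}_{>0}$.

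The step I expect to fight hardest is the inclusion $E\subseteq E^{\circ}\oplus\R^{\cS^2}_{>0}$, i.e.\ that the $\bP_1$-image of every positive steady state lies in $E^{\circ}$. After the regrouping, \eqref{Eq-11} factors as $\dot w=S(w,v)\sum_{k\in\sK^{\circ}_*}\kappa^{\circ}_k w^{y_k}(y_k'-y_k)$, a scalar $S(w,v)=\sum_{z_1,z_2}w^{z_1}v^{z_2}\sum_{\cR_{k,z_1,z_2}}a^{y\to y'}_{k}\kappa_{y\to y'}/\kappa^{\circ}_{k}$ times the vector field of $\cR^{\circ}$ at $w$, which vanishes precisely on $E^{\circ}$. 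To conclude $w\in E^{\circ}$ from $\dot w=0$ one must exclude $S(w,v)=0$ at a positive point, and this is exactly the purpose of the \emph{non-vanishing} clause in (A7); the subtle part is that non-vanishing of each block coefficient does not by itself preclude cancellation in $S(w,v)$ for $w,v>0$, so the genuine work lies in turning the per-block non-vanishing into non-vanishing of $S$ (and of the analogous scalars controlling each $\dot v_i$) on the whole positive orthant.
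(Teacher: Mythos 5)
Your block regrouping is exactly the computation in the paper's proof: the paper rewrites \eqref{Eq-11} and \eqref{Eq-8} so that $\dot w$ (resp.\ each $\dot v_i$) factors as the mass-action vector field of $\cR^{\circ}$ evaluated at $w$ (resp.\ its $\sigma(i)$-th component) times a scalar $S(w,v)=\sum_{z_1\in\cC_1,z_2\in\cC_2}w^{z_1}v^{z_2}A_{z_1,z_2}$ (resp.\ $S_i(w,v)$), where the $A_{z_1,z_2}$ are the $k$-independent block sums, and then asserts $E=E^{\circ}\oplus\R^{\cS^2}_{>0}$ in a single closing sentence. Your reduction (A6)$\wedge$(A7)$\Rightarrow$(A3)$\wedge$(A5) is correct (and since $\cR$ is assumed non-degenerate the degenerate-reaction bookkeeping is moot), so packaging the factorization as an appeal to Theorem~\ref{th-1} rather than as a direct rewriting is immaterial --- except that it shifts the burden onto two statements the paper never actually proves: the implication ``(A5)$\Rightarrow$(A4)'' requires $E^{\circ}\neq\varnothing$, which is not among the hypotheses of Theorem~\ref{th-2}, and the inclusion $E\subseteq E^{\circ}\oplus\R^{\cS^2}_{>0}$ claimed in Theorem~\ref{th-1} under (A3)--(A4) suffers from precisely the cancellation problem you describe.

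That problem, flagged in your last paragraph, is the genuine gap, it is the step the paper's proof passes over in silence, and it cannot be closed, because the theorem as stated is false. Take $\cR^{\circ}=\{\tS_1\xrightarrow{\kappa_1^{\circ}}2\tS_1,\ 2\tS_1\xrightarrow{\kappa_2^{\circ}}\tS_1\}$, $\cS^1=\{\tS_1\}$, $\cS^2=\{\tS_2\}$, $c=\kappa_2^{\circ}/\kappa_1^{\circ}$, and
\begin{align*}
\cR_1&=\{\tS_1+\tS_2\xrightarrow{\alpha}2\tS_1+2\tS_2,\ \ \tS_1+2\tS_2\xrightarrow{\beta}\tS_2\},\\
\cR_2&=\{2\tS_1+\tS_2\xrightarrow{c\alpha}\tS_1,\ \ 2\tS_1+2\tS_2\xrightarrow{c\beta}3\tS_1+3\tS_2\}.
\end{align*}
Here $\cC_1=\{0\}$, $\cC_2=\{\tS_2,2\tS_2\}$, every block $\cR_{k,z_1,z_2}$ is a singleton, $\sigma(\tS_2)=\tS_1$, all $a$'s and $b$'s are $\pm1$, and both the $a$- and the $b$-block sums equal $\alpha/\kappa_1^{\circ}$ on the block $(0,\tS_2)$ and $-\beta/\kappa_1^{\circ}$ on the block $(0,2\tS_2)$, for both $k=1,2$: non-zero and independent of $k$. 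Thus (A1), (A2), (A6), (A7) all hold and $\cR$ is non-degenerate; moreover $E^{\circ}=\{\kappa_1^{\circ}/\kappa_2^{\circ}\}\neq\varnothing$, so even (A4) holds. Yet the rate equation of $\cR$ is $\dot w=\dot v=wv(\alpha-\beta v)(\kappa_1^{\circ}-\kappa_2^{\circ}w)/\kappa_1^{\circ}$, so $E$ contains the entire line $\{(w,\alpha/\beta)\colon w>0\}$ in addition to $E^{\circ}\oplus\R^{\cS^2}_{>0}=\{\kappa_1^{\circ}/\kappa_2^{\circ}\}\times\R_{>0}$, and the conclusion $E=E^{\circ}\oplus\R^{\cS^2}_{>0}$ fails. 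The mechanism is the one you predicted: the block sums have mixed signs, so $S(w,v)=v(\alpha-\beta v)/\kappa_1^{\circ}$ has positive zeros, and these are spurious steady states. Both your argument and the paper's become correct under an additional sign hypothesis --- e.g.\ that the $a$-block sums carry one common sign as $(z_1,z_2)$ ranges over $\cC_1\times\cC_2$ and, for each $i\in\cS^2$, the $b$-block sums likewise --- for then $S$ and every $S_i$ are nowhere zero on the positive orthant; all of the paper's worked examples do satisfy such a condition.
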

\begin{proof}
 By ($\mathbf{A6}$) and ($\mathbf{A7}$), we can rewrite \eqref{Eq-11} and \eqref{Eq-8} as
 \begin{equation}
 \begin{split}
   \dot{w}=&\sum_{k\in\sK^{\circ}_*}\kappa_{k}^{\circ}w^{y_k}(y'_k-y_k)
   \sum_{z_1\in\cC_1}w^{z_1}\sum_{z_2\in\cC_2}v^{z_2}\\&\cdot\sum_{y\to y'\in\cR_{k,z_1,z_2}}
   a^{y\to y'}_{k}\frac{\kappa_{y\to y'}}{\kappa^{\circ}_{k}}\\
   \end{split}\end{equation}
   \begin{equation}
   \begin{split}
   \dot{v}_i=&\sum_{k\in\sK^{\circ}_*}\kappa_{k}^{\circ}w^{y_k}((y'_k)_{\sigma(i)}-(y_k)_{\sigma(i)})
   \sum_{z_1\in\cC_1}w^{z_1}\sum_{z_2\in\cC_2}v^{z_2}\\&\cdot\sum_{y\to y'\in\cR_{k,z_1,z_2}}
   b^{y\to y'}_{k,i}\frac{\kappa_{y\to y'}}{\kappa^{\circ}_{k}},\quad i\in\cS^2.
 \end{split}
 \end{equation}
 Since $\sum_{y\to y'\in\cR_{k,z_1,z_2}}a^{y\to y'}_{k}\frac{\kappa_{y\to y'}}{\kappa^{\circ}_{k}}$ and $\sum_{y\to y'\in\cR_{k,z_1,z_2}}b^{y\to y'}_{k,i}\frac{\kappa_{y\to y'}}{\kappa^{\circ}_{k}}$ are both non-zero and independent of $k\in\sK^{\circ}_*$, we have $E=E^{\circ}\oplus\R^{\cS^2}_{>0}$.
\end{proof}

\begin{definition}
An RN is called \emph{multistationary} if there exists a stoichiometric compatibility class with more than one
PSS. Hence potentially an RN may admit multiple PSSs on some stoichiometric compatibility classes while admitting at most one PSS on other stoichiometric compatibility classes, as illustrated by Example~\ref{ex-8} below.
An RN is called \emph{absolute concentration robust} (ACR) if the projection of all positive steady states onto a species $\tS_i$ ($1\le i\le d$) are identical \cite{SF10}.
\end{definition}

\begin{example}\label{ex-8}
  Consider the mass-action RN $\cR$
  \begin{align*}
&\tS_1+\tS_2\ce{->[6]}2\tS_1,\quad \tS_1+2\tS_2\ce{->[4]}3\tS_2,\quad \tS_1+3\tS_2\ce{->[1]}2\tS_1+2\tS_2,\\
&\qquad\qquad\ 2\tS_1+\tS_2\ce{->[2]}3\tS_2,\quad 3\tS_1+\tS_2\ce{->[1]}4\tS_1.
\end{align*}
It is readily verified that the rate equation for $\cR$ is
\begin{align*}
  \dot{w}=&wv(w^2-4w+v^2-4v+6)\\
  \dot{w}=&-wv(w^2-4w+v^2-4v+6).
\end{align*}
Moreover, $\cR$ is conservative with $w(t)+v(t)=w(0)+v(0)$ for all $t>0$. Hence $E=\{(x,y)\in\R^2_{>0}\colon (x-2)^2+(y-2)^2=2\}$ is the set of PSSs. Hence for $c=(c_1,c_2)\in\R^2_{>0}$, $\sS_c$ admits two PSSs (with the left one being an unstable node and the right one stable) if $2<c_1+c_2<6$, one PSS (saddle) if $c_1+c_2=2$ or $c_1+c_2=6$, and no PSS if $0\le c_1+c_2<2$ or $c_1+c_2>6$. See Figure~\ref{figm}.
\end{example}

\begin{figure}%
\centering
\includegraphics[height=2in]{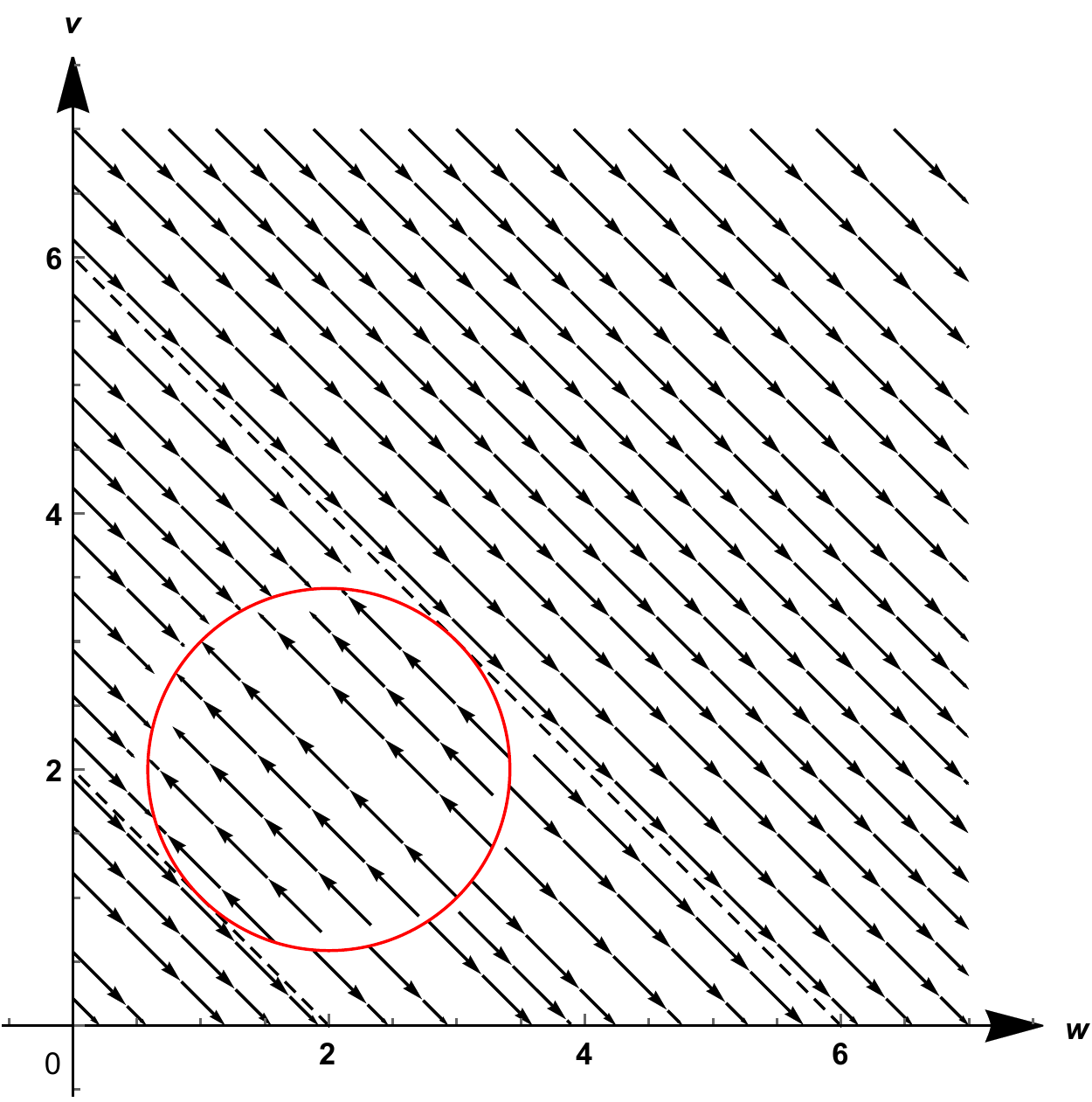}%
\caption{Streamlines for Example~\ref{ex-8}. Red circle: PSSs.}\label{figm}
\end{figure}

The following two corollaries provide a lifting result pertaining to multistationarity and the ACR property.
One can also view the results regarding reductions (of a larger RN), as opposed to lifting (of a smaller RN).

For $c\in\R^{\cS^1}_{\ge0}$, denote $\sS^{\circ}$ and $\sS^{\circ}_c$ the stoichiometric subspace and the stoichiometric compatibility class of the reference RN $\cR^{\circ}$, respectively.

Assume
\medskip

\noindent{($\mathbf{A8}$)} For  $k\in\sK^{\circ}_*$, $\cR^*_{k}\neq\varnothing$.

\begin{corollary}\label{co-1}
Given a non-degenerate mass-action RN $\cR$ with  non-degenerate mass-action reference RN $\cR^{\circ}$. Assume ($\mathbf{A1}$)-($\mathbf{A2}$), ($\mathbf{A8}$), and that $\cR^{\sf b}$ is multistationary. Then $\cR$ is multistationary provided either (i) ($\mathbf{A3}$) and ($\mathbf{A5}$) or (ii) ($\mathbf{A6}$) and ($\mathbf{A7}$). In this case, $\cR$ is called a \emph{multistationarity lifting} of $\cR^{\sf b}$.
\end{corollary}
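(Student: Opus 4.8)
The plan is to exhibit, inside a single stoichiometric compatibility class of $\cR$, two distinct positive steady states, by lifting a co-compatible pair of PSSs of the reference network through the product structure $E=E^{\circ}\oplus\R^{\cS^2}_{>0}$. First I would invoke the relevant existing result to obtain that structure: under hypothesis (i) apply Theorem~\ref{th-1} (with ($\mathbf{A3}$) and ($\mathbf{A5}$)), and under hypothesis (ii) apply Theorem~\ref{th-2} (with ($\mathbf{A6}$) and ($\mathbf{A7}$)); in either case $\varnothing\neq E=E^{\circ}\oplus\R^{\cS^2}_{>0}$, so in particular $\bP_1E=E^{\circ}$. It then suffices to locate $w_1,w_2\in E^{\circ}$ with $w_1\neq w_2$ and $w_1-w_2\in\sS^{\circ}$, and lift them.

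Next I would transfer the multistationarity hypothesis from the base network $\cR^{\sf b}$ to the reference network $\cR^{\circ}$. Since $\cR^{\circ}$ is non-degenerate we have $\sK^{\circ}=\sK^{\circ}_*$, and by the remark that a non-degenerate reference network forces $\cR^{\sf b}$ to be non-degenerate, every reaction of $\cR^{\sf b}$ is non-degenerate; hence for each $y\to y'\in\cR_{k,*}$ the scalar $a^{y\to y'}_{k}$ of \eqref{Eq-4} is nonzero. Combined with ($\mathbf{A8}$) this already gives that the stoichiometric subspace $\sS^{\sf b}$ of $\cR^{\sf b}$ equals $\sS^{\circ}$. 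Using that the $w$-equation \eqref{Eq-11} of $\cR$, at any fixed $v\in\R^{\cS^2}_{>0}$, is a positive scalar multiple of the rate field of $\cR^{\circ}$, one identifies the positive zero loci and concludes that $\cR^{\sf b}$ and $\cR^{\circ}$ share the same PSS set, i.e.\ $E^{\sf b}=E^{\circ}$. The multistationarity of $\cR^{\sf b}$ therefore yields the desired pair $w_1,w_2\in E^{\circ}$ with $w_1\neq w_2$ and $w_1-w_2\in\sS^{\circ}$.

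The last step is the lift itself. By ($\mathbf{A8}$) I may pick, for each $k\in\sK^{\circ}_*$, a reaction $y\to y'\in\cR_{k,*}$, whose reaction vector satisfies $\bP_1(y'-y)=a^{y\to y'}_{k}(y_k'-y_k)$ with $a^{y\to y'}_{k}\neq0$; hence $\bP_1$ maps $\sS$ onto $\sS^{\circ}$. Writing $w_1-w_2=\bP_1\xi$ for some $\xi\in\sS$ and setting $\Delta v:=\bP_2\xi$, I choose $v_2\in\R^{\cS^2}_{>0}$ with coordinates large enough that $v_1:=v_2+\Delta v\in\R^{\cS^2}_{>0}$ as well. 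Then $x_i:=w_i\oplus v_i\in E^{\circ}\oplus\R^{\cS^2}_{>0}=E$ for $i=1,2$; the two points are distinct because $w_1\neq w_2$; and $x_1-x_2=(w_1-w_2)\oplus\Delta v=\bP_1\xi\oplus\bP_2\xi=\xi\in\sS$, so $x_1,x_2$ lie in one stoichiometric compatibility class of $\cR$. This exhibits two PSSs of $\cR$ in a common class, so $\cR$ is multistationary.

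I expect the main obstacle to be the identification $E^{\sf b}=E^{\circ}$ in the middle step: although ($\mathbf{A8}$) and non-degeneracy readily force $\sS^{\sf b}=\sS^{\circ}$, the base and reference networks are genuinely different reaction networks, and one must argue carefully that their positive steady-state loci coincide (through the positive factor relating the projected $w$-dynamics of $\cR$ in \eqref{Eq-11} to the rate field of $\cR^{\circ}$) before the co-compatible pair can be produced in $E^{\circ}$. By contrast, the lifting construction is routine once surjectivity of $\bP_1|_{\sS}$ onto $\sS^{\circ}$ and the freedom to translate $v_2$ within the positive orthant are secured.
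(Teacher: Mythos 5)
Your overall route coincides with the paper's: invoke Theorem~\ref{th-1} under (i) and Theorem~\ref{th-2} under (ii) to get $\varnothing\neq E=E^{\circ}\oplus\R^{\cS^2}_{>0}$, use ($\mathbf{A1}$), ($\mathbf{A2}$), ($\mathbf{A8}$) and non-degeneracy to get that $\bP_1$ maps $\sS$ onto $\sS^{\circ}$, and then lift a stoichiometrically compatible pair of distinct PSSs of the reference network. Your last step is in fact \emph{more} careful than the paper's: the paper asserts the set identity $E_{\widetilde{c}}=E^{\circ}_{\bP_1\widetilde{c}}\oplus(\bP_2\sS_{\widetilde{c}}\cap\R^{\cS^2}_{>0})$ for an arbitrary lift $\widetilde{c}$ of $c$, which is delicate as stated (a compatibility class is an affine slice of the orthant, not a product of its projections), whereas your explicit construction --- pick $\xi\in\sS$ with $\bP_1\xi=w_1-w_2$, set $\Delta v=\bP_2\xi$, and translate $v_2$ far enough into the positive orthant that $v_2$ and $v_2+\Delta v$ are both positive --- is exactly the argument needed to make that step rigorous.

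The genuine gap is your middle step, the claim $E^{\sf b}=E^{\circ}$. Equation \eqref{Eq-11} describes the $\cS^1$-projection of the dynamics of $\cR$, not the mass-action dynamics of $\cR^{\sf b}$: the base network $\cR^{\sf b}=\bP_1\cR$ is a different network, whose rate constants are never specified in the paper (and which may contain several distinct reactions arising from a single $\cR_k$), so the statement ``at fixed $v$, \eqref{Eq-11} is a positive multiple of the rate field of $\cR^{\circ}$'' says nothing about the zero locus of $\cR^{\sf b}$'s own vector field. Moreover, the factor in question, $\sum_{y\to y'\in\cR_{k,*}}a_k^{y\to y'}\tfrac{\kappa_{y\to y'}}{\kappa_k^{\circ}}w^{\bP_1y-y_k}v^{\bP_2y}$, need not be positive on the positive orthant, since the coefficients $a_k^{y\to y'}$ in \eqref{Eq-4} can be negative rationals. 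So $E^{\sf b}=E^{\circ}$ is not established by your argument. To be fair, this mismatch originates in the paper itself: its statement assumes $\cR^{\sf b}$ is multistationary, but its proof immediately reads ``Assume $\cR^{\circ}$ is multistationary'' and never relates the two networks' steady states. If the hypothesis is interpreted as multistationarity of the reference network $\cR^{\circ}$ (which is what the paper's proof actually uses, and what its examples satisfy, since there $\cR^{\sf b}=\cR^{\circ}$ as reaction sets), then your middle step becomes unnecessary and the remainder of your argument is complete and correct.
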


\begin{proof}
We only prove the conclusions under (i). The other case can be proved analogously. Let $\sS_{\circ}$ be the stoichiometric subspace of $\cR^{\circ}$. Then$$\sS^{\circ}=\left\{\sum_{k\in\sK^{\circ}_*}c_{k}(y_k'-y_k)\colon c_{k}\in\R\right\}.$$
By ($\mathbf{A1}$), ($\mathbf{A2}$) and ($\mathbf{A8}$),  we have $\bP_1\sS=\sS^{\circ}$. For every $c\in\R^{\cS^1}_{\ge0}$, let $E^{\circ}_c$ be the set of PSSs on the stoichiometric compatibility class $\sS_c$. Assume $\cR^{\circ}$ is multistationary. Then $\#E^{\circ}_c>1$ for some $c\in\R^{\cS^1}_{\ge0}$. By Theorem~\ref{th-2}, we have $E=E^{\circ}\oplus\R^{\sS^2}_{>0}$. Choose a $\widetilde{c}\in\R^{\cS}_{\ge0}$ with $\bP_1\widetilde{c}=c$. Hence $E_{\widetilde{c}}=E\cap\sS_{\widetilde{c}}$. Since $\bP_1E_{\widetilde{c}}=E^{\circ}_{\bP_1\widetilde{c}}$, we have $E_{\widetilde{c}}=E^{\circ}_{\bP_1\widetilde{c}}\oplus(\bP_2\sS_{\widetilde{c}}\cap\R^{\sS^2}_{>0})$. By ($\mathbf{A3}$), ($\mathbf{A5}$) and ($\mathbf{A8}$), we have $\bP_2\sS_{\widetilde{c}}\cap\R^{\sS^2}_{>0}\neq\varnothing$. Hence $\#E_{\widetilde{c}}\ge\#E^{\circ}_{\bP_1\widetilde{c}}>1$, i.e., $\cR$ is also multistationary.
\end{proof}

\begin{example}
Consider $\cR$
\[\tS_1+\tS_2\ce{->[\kappa_1]}2\tS_2,\quad 2\tS_1+\tS_2\ce{->[\kappa_2]}3\tS_1,\quad 3\tS_1+\tS_2\ce{->[\kappa_3]}2\tS_1+\tS_2.\] Let $\cS^1=\{\tS_1\}$ and the reference RN $\cR^{\circ}$ be
\[\tS_1\ce{->[\kappa_1]}\varnothing,\quad 2\tS_1\ce{<=>[\kappa_2][\kappa_3]}3\tS_1.\]
Assume $\kappa_2^2>4\kappa_1\kappa_3$. Then it is readily verified that $\#E^{\circ}=2$. Moreover, in this case, $a^{y\to y'}_{k}\equiv1$, $b^{y\to y'}_{k}\equiv-1$. All assumptions ($\mathbf{A1}$), ($\mathbf{A2}$), ($\mathbf{A6}$) and ($\mathbf{A7}$) are satisfied. By Corollary~\ref{co-1}, $E=E^{\circ}\oplus\R$, and $\cR$ is a multistationarity lifting of $\cR^1$.
\end{example}

\begin{corollary}\label{co-2}
Given a non-degenerate mass-action RN $\cR$ with  mass-action reference RN $\cR^{\circ}$. Assume ($\mathbf{A1}$)-($\mathbf{A2}$), and $\cR^{\circ}$ is ACR in a species $\text{S}_i$, $i\in\cS^1$. Then $\cR$ is also ACR in a species $\text{S}_i$ with the same ACR value, provided either (i) ($\mathbf{A3}$) and  ($\mathbf{A4}$)  or (ii) ($\mathbf{A3}$) and ($\mathbf{A5}$) or (iii) ($\mathbf{A6}$) and ($\mathbf{A7}$). In any case, $\cR$ is called an \emph{ACR lifting} of $\cR^{\circ}$.
\end{corollary}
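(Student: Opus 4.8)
The plan is to reduce all three cases to the structural conclusions already established in Theorems~\ref{th-1} and \ref{th-2}, and then exploit that the ACR species lies in the reference block $\cS^1$. First I would record the relevant containment under each hypothesis: in case (i), ($\mathbf{A3}$) and ($\mathbf{A4}$), Theorem~\ref{th-1} gives $\varnothing\neq E\subseteq E^{\circ}\oplus\R^{\cS^2}_{>0}$; in case (ii), ($\mathbf{A3}$) and ($\mathbf{A5}$), the same theorem upgrades this to $E=E^{\circ}\oplus\R^{\cS^2}_{>0}$; and in case (iii), ($\mathbf{A6}$) and ($\mathbf{A7}$), Theorem~\ref{th-2} again yields $E=E^{\circ}\oplus\R^{\cS^2}_{>0}$. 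In every case $E$ is nonempty, so the existence part of the ACR definition holds (in cases (ii) and (iii) this uses $E^{\circ}\neq\varnothing$, which is part of $\cR^{\circ}$ being ACR).

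Next I would extract a coordinate identity. Fix any $x\in E$ and use $\R^{\cS}=\R^{\cS^1}\oplus\R^{\cS^2}$ to write $x=w\oplus v$ with $w=\bP_1x$ and $v=\bP_2x$. The containment from the previous step guarantees $w\in E^{\circ}$ and $v\in\R^{\cS^2}_{>0}$. Since the ACR species satisfies $i\in\cS^1$, the $i$-th coordinate of $x$ is insensitive to the fiber variables $v$ and equals the corresponding coordinate of its $\cS^1$-part, namely $x_i=(\bP_1x)_i=w_i$.

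Finally I would invoke the ACR hypothesis on the reference network. Because $\cR^{\circ}$ is ACR in $\tS_i$, there is a common value $x_i^*$ with $w_i=x_i^*$ for every $w\in E^{\circ}$, in particular for the $w$ arising above. Combined with the coordinate identity, $x_i=x_i^*$ for all $x\in E$, so $\cR$ is ACR in $\tS_i$ with the same ACR value $x_i^*$. The argument is verbatim across the three cases; only the theorem cited to secure $\bP_1E\subseteq E^{\circ}$ and $E\neq\varnothing$ changes.

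I do not expect a genuine obstacle: the analytic content is packaged entirely inside the two lifting theorems, and the remaining step is the elementary observation that a coordinate indexed by $\cS^1$ does not see the fiber variables. The one point deserving a moment's care is nonemptiness of $E$, which the ACR definition requires; this is delivered by the explicit $\neq\varnothing$ clause of Theorem~\ref{th-1} in case (i) and by the exact decomposition together with $E^{\circ}\neq\varnothing$ in cases (ii) and (iii).
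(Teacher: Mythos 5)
Your proposal is correct and takes essentially the same route as the paper: the paper's own proof is a one-line appeal to Theorems~\ref{th-1} and \ref{th-2}, and your argument simply makes explicit the routine details that appeal relies on (the containment $\bP_1E\subseteq E^{\circ}$, nonemptiness of $E$, and the fact that a coordinate indexed by $i\in\cS^1$ is unaffected by the fiber variables). Your care about nonemptiness in each of the three cases is exactly the right point to flag, and nothing further is needed.
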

\begin{proof}
  The conclusions follow directly from Theorems~\ref{th-1} and \ref{th-2}.
\end{proof}

\begin{example}\label{ex-12}
  Consider the following RN $\cR$ \cite{SF10},
  \[\tS_1+\tS_2\ce{->[\kappa_1]}2\tS_2,\quad \tS_2\ce{->[\kappa_2]}\tS_1.\] Let $\cS^1=\{\tS_1\}$ and $\cR^{\circ}=\{\tS_1\ce{<=>[\kappa_1][\kappa_2]}0\}$ be the mass-action reference RN.
  This reaction network is the simple closed SIS epidemic contact network ($\tS_1$ represents the number of susceptibles and $\tS_2$ that of the infected). Hence ($\mathbf{A1}$) is satisfied with $\cR^{\sf b}=\cR^{\circ}$. Moreover,  ($\mathbf{A2}$) is satisfied with the decomposition $\cR=\cR_{1}\sqcup\cR_{2}$  where $\cR_{1}=\{\tS_2\ce{->[1]}2\tS_2\}$ and $\cR_{2}=\{\tS_2\ce{->[1]}0\}$. It is readily verified that
  ($\mathbf{A6}$) is satisfied with $\cC_1=\{0\}$ and $\cC_2=\{\tS_2\}$, and ($\mathbf{A7}$) is satisfied with $\sigma(2)=1$,  $b_{1,2}^{\tS_1+\tS_2\to 2\tS_2}=b_{2,2}^{\tS_2\to \tS_1}=-1$, and
  $$\sum_{y\to y'\in\cR_{k,z_1,z_2}}a^{y\to y'}_{k}\frac{\kappa_{y\to y'}}{\kappa^{\circ}_{k}}=1,\quad \sum_{y\to y'\in\cR_{k,z_1,z_2}}b^{y\to y'}_{k,i}\frac{\kappa_{y\to y'}}{\kappa^{\circ}_{k}}=-1.$$
  Since $\cR^1$ is ACR in species $\tS_1$ with ACR value $\frac{\kappa_2}{\kappa_1}$, we have by Corollary~\ref{co-2} that $\cR$ is also ACR in $\tS_1$ with  the same ACR value.
\end{example}

A celebrated result  provides a sufficient condition for ACR regardless of the reaction rate constants \cite{SF10}.

\begin{proposition}\label{prop-SF}
  Let $\cR$ be a non-degenerate mass-action RN. Assume $E\neq\varnothing$ and $\cR$ has deficiency one. If there exist a pair of non-terminal complexes which differ only in species $\tS_i$, then $\cR$ is ACR in $\tS_i$.
\end{proposition}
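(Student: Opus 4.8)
The plan is to pass to the standard complex-space formalism of chemical reaction network theory and exploit the one-dimensionality forced by the deficiency-one hypothesis. First I would rewrite the rate equation \eqref{Eq-1} as $\dot{x}=YA_\kappa\psi(x)$, where $\psi\colon\R^{\cS}_{>0}\to\R^{\cC}$ is the monomial map $\psi(x)_y=x^y$, where $Y\colon\R^{\cC}\to\R^{\cS}$ is the linear map sending the basis vector $e_y$ to the complex $y$ (so $Y(e_{y'}-e_y)=y'-y$), and where $A_\kappa$ is the weighted reaction-graph Laplacian, characterised by $A_\kappa\psi(x)=\sum_{y\to y'\in\cR}\kappa_{y\to y'}x^y(e_{y'}-e_y)$. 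This reduces the whole problem to linear algebra on $\R^{\cC}$ together with one graph-theoretic input.

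Next I would record two structural facts. Set $\mathcal{D}:=\spa\{e_{y'}-e_y\colon y\to y'\in\cR\}\subseteq\R^{\cC}$; then $\mathrm{im}\,A_\kappa\subseteq\mathcal{D}$ and $Y(\mathcal{D})=\sS$, while $\dim\mathcal{D}=\#\cC-\ell_{\cR}$ by the usual rank count for the incidence map. Applying rank--nullity to the surjection $Y|_{\mathcal{D}}\colon\mathcal{D}\to\sS$ gives $\dim(\ker Y\cap\mathcal{D})=\#\cC-\ell_{\cR}-\dim\sS=\delta=1$; fix a generator $\gamma$ of this line. Since any PSS $x^*$ satisfies $YA_\kappa\psi(x^*)=0$ with $A_\kappa\psi(x^*)\in\mathcal{D}$, we conclude $A_\kappa\psi(x^*)=\mu(x^*)\gamma$ for a scalar $\mu(x^*)$. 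The second fact is the structure of $\ker A_\kappa$: by the Matrix--Tree theorem for the weighted reaction graph, $\ker A_\kappa$ is spanned by vectors each supported on a single terminal strong linkage class, so every element of $\ker A_\kappa$ vanishes at all non-terminal complexes. Because the hypothesis provides two non-terminal complexes and $\psi(x^*)$ is strictly positive in every coordinate, $\psi(x^*)\notin\ker A_\kappa$; hence $A_\kappa\psi(x^*)\neq0$ and $\mu(x^*)\neq0$ for every PSS (equivalently, no PSS is complex-balanced).

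Then I would compare two positive steady states $a,b\in E$. From $A_\kappa\psi(a)=\mu(a)\gamma$ and $A_\kappa\psi(b)=\mu(b)\gamma$ I obtain $A_\kappa\big(\mu(b)\psi(a)-\mu(a)\psi(b)\big)=0$, so the vector $\mu(b)\psi(a)-\mu(a)\psi(b)$ lies in $\ker A_\kappa$ and therefore vanishes at every non-terminal complex. Evaluating at the two non-terminal complexes $y_1,y_2$ differing only in $\tS_i$ — write $y_1-y_2=c\,e_i$ with $c\neq0$, $e_i$ the $i$-th standard basis vector — gives $\mu(b)\,a^{y_j}=\mu(a)\,b^{y_j}$ for $j=1,2$. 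Dividing the two identities (all quantities positive and $\mu(a),\mu(b)\neq0$) yields $a^{y_1-y_2}=b^{y_1-y_2}$, i.e.\ $a_i^{\,c}=b_i^{\,c}$, whence $a_i=b_i$. As $a,b\in E$ were arbitrary and $E\neq\varnothing$, this proves that $\cR$ is ACR in $\tS_i$.

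The hard part is the graph-theoretic input: establishing that $\ker A_\kappa$ is concentrated on the terminal strong linkage classes (hence vanishes at non-terminal complexes), which simultaneously rules out a complex-balanced PSS and guarantees $\mu(x^*)\neq0$. This is exactly where the combinatorics of the reaction graph enters, through the Matrix--Tree theorem / the theory of the Laplacian $A_\kappa$; once it is in hand, the remainder is the rank--nullity identification of $\ker Y\cap\mathcal{D}$ with the deficiency-one line and the elementary observation that two complexes differing only in $\tS_i$ convert a ratio of monomials into a power of $x_i$.
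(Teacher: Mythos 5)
Your proof is correct. The paper itself gives no proof of Proposition~\ref{prop-SF} --- it is stated as a celebrated result with a citation to \cite{SF10} --- and your argument is precisely the original Shinar--Feinberg proof from that reference: the rank--nullity identification of $\ker Y\cap\mathcal{D}$ as a line (using $\delta=\#\cC-\ell_{\cR}-\dim\sS=1$), the Matrix--Tree description of $\ker A_\kappa$ as spanned by vectors supported on terminal strong linkage classes (hence vanishing at non-terminal complexes, which also forces $\mu(x^*)\neq0$), and the comparison of two positive steady states at the two non-terminal complexes differing only in $\tS_i$, yielding $a_i^{\,c}=b_i^{\,c}$ and thus $a_i=b_i$.
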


From the construction of the ACR lifting given in Corollary~\ref{co-2}, the defi- ciency of the RN is generally \emph{not} preserved. Hence one can combine Corollary~\ref{co-2} with Proposition~\ref{prop-SF} to generate RNs of \emph{high deficiency} from deficiency one core modules, as illustrated by the two-species toy example below, which provides a different ACR lifting than given in Corollary~\ref{co-2}.

\begin{example}\label{ex-6}
Consider the following mass-action RN $\cR$:
\[\xymatrix{
5\tS_1+6\tS_2&2\tS_1+3\tS_2 \ar[r]^{\kappa_1} &0 \\
         3\tS_1+3\tS_2 \ar[r]^{\kappa_1} \ar[u]^{\kappa_2} \ar[d]^{\kappa_3}& 3\tS_1+4\tS_2 \ar@{}[r]|{\ce{<=>[\hspace{.15cm}\kappa_2\hspace{.15cm}][\hspace{.15cm}\kappa_1\hspace{.15cm}]}} \ar[d]^{\kappa_{3}}& 2\tS_1+4\tS_2\\
          \tS_1 &  4\tS_1+4\tS_2 & 4\tS_1+3\tS_2 \ar[l]_{\kappa_3} \ar[d]^{\kappa_{2}}\\
          & &4\tS_1+2\tS_2}
\]
\noindent(note that same reaction rates are identical to each other).
The reaction graph associated with $\cR$ contains 10 nodes, 2 linkage classes and $\sS=\R^2$. Hence the deficiency of $\cR$ is $10-2-2=6$.

Assume $\kappa_2>\kappa_3$. Let $\cR^{\circ}=\{0\ce{<=>[\kappa_1][\kappa_{2}]}\tS_1\ce{->[\kappa_3]}2\tS_1\}$ be the reference RN. Label the reactions in $\cR^{\circ}$ by the indices of the reaction rate constants. It is easy to show that the rate equation for $\cR^{\circ}$ is
\[\dot{w}=\kappa_1-(\kappa_{2}-\kappa_3)w.\] Hence the set of PSSs of $\cR^{\circ}$ is given by $E^{\circ}=\{\frac{\kappa_2-\kappa_3}{\kappa_1}\}$, and $\cR^{\circ}$ is ACR (one can also deduce this from Proposition~\ref{prop-SF}).
Hence $\cR$ has the decomposition $\cR=\sqcup_{i=1}^3\cR_i$. It is easy to verify that the rate equation for $\cR$ is
\begin{equation}\label{Eq-9}
\begin{cases}
\dot{w}=(v-2)w^2v^3\left(\kappa_1-(\kappa_{2}-\kappa_3)w\right),\\
\dot{v}=(w-3)w^2v^3\left(\kappa_1-(\kappa_{2}-\kappa_3)w\right),
\end{cases}
\end{equation}
which implies that
\[\frac{{\rm d}w}{{\rm d}v}=\frac{v-2}{w-3},\quad \text{if}\ w\neq3.\] This gives the first integral of \eqref{Eq-9}: $$H(w,v)\colon =(w-3)^2-(v-2)^2,$$ and $\cR$ is an integrable system. Moreover, the set of PSSs $E=E^{\circ}\times\R_{>0}\cup\{(3,2)\}$. Hence $\cR$ is ACR in species $\tS_1$ with ACR value $3$ whenever $\frac{\kappa_2-\kappa_3}{\kappa_1}=3$.

Moreover, the equilibrium $(3,2)$ is a saddle, $(3,v^*)$ are a stable node for $v^*>2$ and an unstable node for $v^*<2$. The invariant manifold through $(3,v^*)$ is $\{(w,v)\in\R_{>0}^2\colon (w-3)^2-(v-2)^2=-(v-v^*)^2\}$, which is not a line (1-d hyperplane) but a hyperbola, for every $v^*\neq2$. This is different from the conservative system in Example~\ref{ex-12}. See Figure~\ref{figh}.
\end{example}

\begin{figure}%
\centering
\includegraphics[height=2in]{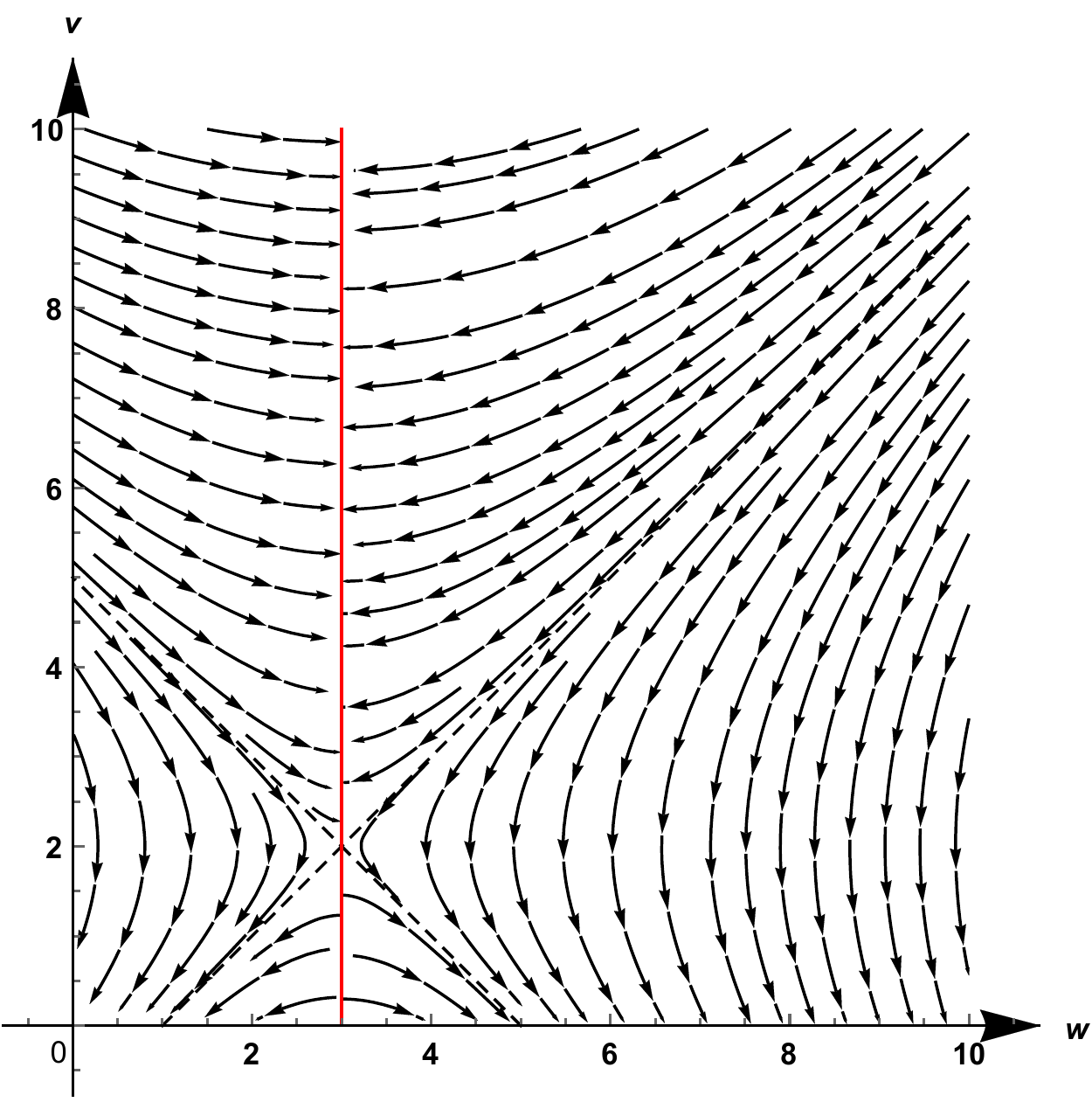}%
\caption{Streamlines for Example~\ref{ex-6} with $\kappa_1=3,\ \kappa_{3}-\kappa_2=1$. Red line: PSSs.}\label{figh}
\end{figure}
Example~\ref{ex-6} illustrates that even certain assumptions (e.g., ($\mathbf{A7}$)) fail, similar lifting still is valid, and can produce stationary dynamics with different geometries. In particular, it demonstrates that the ACR system can have invariant manifolds (characterized in terms of first integrals) which are not hyperplanes (for conservative systems) but hyperbolas, different from what has been observed in the literature (e.g., Example~\ref{ex-12}). This may shed a new light on the study of ACR systems.

Now we present another RN whose invariant manifolds being ellipses.

\begin{example}\label{ex-7}
Consider the following mass-action RN $\cR$:
\[\xymatrix{
2\tS_1+4\tS_2 \ar[r]^{\kappa_1}  &\tS_1+4\tS_2& 4\tS_1+2\tS_2\\
          3\tS_1+4\tS_2 \ar[u]^{\kappa_{3}} \ar[r]^{\kappa_2} &  4\tS_1+4\tS_2 & 4\tS_1+3\tS_2 \ar[l]_{\kappa_3} \ar[u]^{\kappa_{2}}\\
        3\tS_1+3\tS_2 \ar[u]^{\kappa_{1}} \ar[d]^{\kappa_2}   \ar[r]^{\kappa_3}   &    5\tS_1\\
        \tS_1+6\tS_2&2\tS_1+3\tS_2 \ar[r]^{\kappa_1}  &4\tS_1}
\]
(note that some reaction rates are identical). The reaction graph associated with $\cR$ contains 11 nodes, 2 linkage classes and $\sS=\R^2$. Hence the deficiency of $\cR$ is $11-2-2=7$.

Assume $\kappa_2>\kappa_3$. Let $\cR^{\circ}$ defined in Example~\ref{ex-6} be the reference RN. Hence $\cR^{\circ}$ is ACR. Moreover, $\cR$ has the decomposition $\cR=\sqcup_{i=1}^3\cR_i$, and the rate equation for $\cR$ is
\begin{equation}\label{Eq-9}
\begin{cases}
\dot{w}=-(v-2)w^2v^3\left(\kappa_1-(\kappa_{2}-\kappa_3)w\right),\\
\dot{v}=(w-3)w^2v^3\left(\kappa_1-(\kappa_{2}-\kappa_3)w\right),
\end{cases}
\end{equation}
which implies that
\[\frac{{\rm d}w}{{\rm d}v}=-\frac{v-2}{w-3},\quad \text{if}\ w\neq3.\] This gives the first integral of \eqref{Eq-9}: $$H(w,v)\colon =(w-3)^2+(v-2)^2,$$ and $\cR$ is an integrable system. Moreover, the set $E$ of PSSs coincides with that in Example~\ref{ex-6}. Hence $\cR$ is ACR in species $\tS_1$ with ACR value $3$ whenever $\frac{\kappa_2-\kappa_3}{\kappa_1}=3$.

Moreover, the equilibrium $(3,2)$ is a saddle, $(3,v^*)$ are a stable node for $v^*<2$ and an unstable node for $v^*>2$. The invariant manifold through $(3,v^*)$ is $\{(w,v)\in\R_{>0}^2\colon (w-3)^2+(v-2)^2=(v^*-2)^2\}$, which is an ellipse for every $v^*\neq2$. This is also different from the conservative system in Example~\ref{ex-12}. See Figure~\ref{fige}.
\end{example}

\begin{figure}%
\centering
\includegraphics[height=2in]{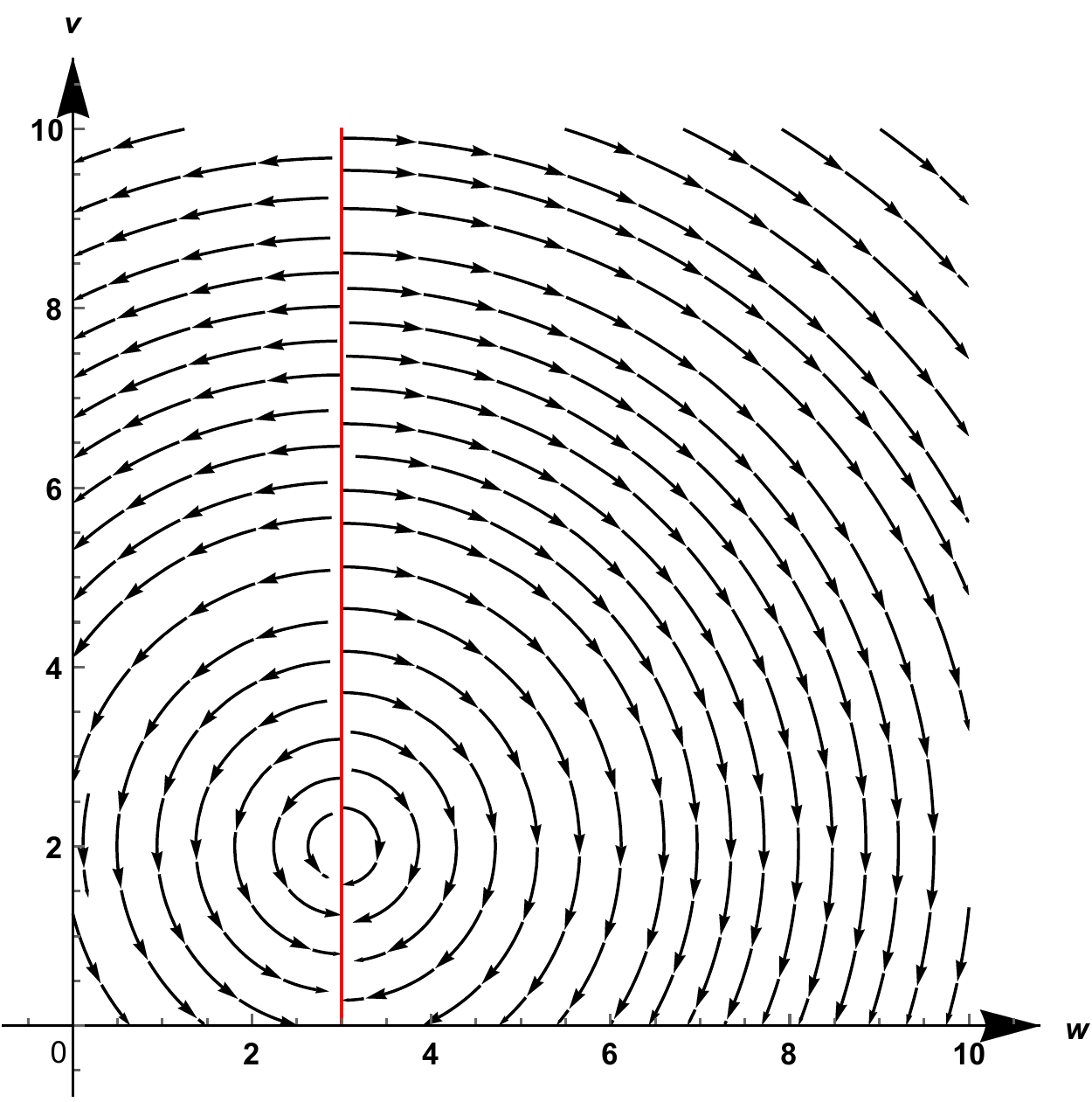}%
\caption{Streamlines for Example~\ref{ex-7} with $\kappa_1=3,\ \kappa_{3}-\kappa_2=1$. Red line: PSSs.}\label{fige}
\end{figure}

\section{Outlooks}

(i) In the main results, we provide conditions ensuring the projection of the set of PSSs of an RN $\cR$ coincides with that of the reference RN $\cR^{\circ}$. Nevertheless, it remains unknown if stability of the PSS of $\cR$ is consistent with that of PSS of $\cR^{\circ}$. This is the case for some known lifting schemes \cite{JS13}.

\noindent(ii) In this chapter, lifting of RNs preserving the ACR property is based on the fiber decomposition of a large RN. Such a decomposition is expected to preserve good properties of RNs. We list a few questions here:
\begin{enumerate}
  \item[(a)] Under what conditions can dynamical/algebraic properties of an RN be deduced barely from its BRN and the FRNs?
  \item[(b)] Are properties consistent for an RN and its BRN and FRNs? For instance, if all FRNs are of deficiency zero, is the original reaction network so? Or, if all FRNs are complex-balanced \cite{HJ72}, are the original reaction network also complex-balanced?
\end{enumerate}

\noindent(iii) The notion of decomposition of RNs has been applied to analysis of  metabolic reaction networks and in  bioinformatics \cite{YSNL07,HH10,KCFS11}. We expect the fiber decomposition may play a role in these regards as well.

\noindent(iv) The fiber decomposition of RNs is readily adapted to \emph{stochastic reaction networks} (by only adjusting the kinetics). In light of the applications of fiber decomposition in the deterministic setting, we believe the analogue for stochastic reaction networks  might also play an important role on similar topics (i.e., lifting stochastic reaction networks preserving for example stationary properties \cite{H19} and \emph{structural classification} \cite{WX20}).
    \begin{example}\label{Ex-10}
      Consider $\cR^{\circ}$:
    \[
    0\ce{<=>[\kappa_1][\kappa_2]}\tS_1\ce{<=>[\kappa_3][\kappa_4]}2\tS_1\ce{->[\kappa_5]}3\tS_1,
    \]
    where $\kappa_i>0$, $\kappa_2>\kappa_3$, $\kappa_4<\kappa_5$ and $(\kappa_3-\kappa_2)^2=4\kappa_1(\kappa_5-\kappa_4)$. The rate equation for $\cR^{\circ}$ is
    \[\dot{w}=(\kappa_5-\kappa_4)(w-w_*)^2,\]where $w_*=\frac{\kappa_2-\kappa_3}{2(\kappa_5-\kappa_4)}$. Hence $\cR^{\circ}$ is ACR.
    Consider the lifting $\cR$ of $\cR^{\circ}$:
    \[\tS_2\ce{->[\kappa_1]}\tS_1,\quad 2\tS_2\ce{<-[\kappa_2]}\tS_1+\tS_2\ce{->[\kappa_3]}2\tS_1,\quad \tS_1+2\tS_2\ce{<-[\kappa_4]}2\tS_1+\tS_2\ce{->[\kappa_5]}3\tS_1.\]
     The rate equation for $\cR$ is
     \begin{align*}
       \dot{w}=&v(\kappa_5-\kappa_4)(w-w_*)^2,\\
       \dot{v}=&-v(\kappa_5-\kappa_4)(w-w_*)^2.
     \end{align*} It is readily verified that $\cR$ is conservative and an ACR-lifting of $\cR^{\circ}$. Nonetheless, by \cite[Theorem~4.6]{WX20}, $\cR$ is positive recurrent on each of its finite compatibility classes while $\cR^{\circ}$ is explosive a.s. on $\N_0$.
    \end{example}

    This example reveals that an ACR lifting may \emph{not} preserve the stochastic dynamics (e.g., explosivity) for the respective stochastic reaction networks.

\section*{Acknowledgements}

CW acknowledges funding from the Novo Nordisk Foundation, Denmark. CX acknowledges the TUM Foundation Fellowship as well as the Alexander von Humboldt Fellowship funded by Alexander von Humboldt Foundation, Germany.

\bibliographystyle{plain}
\bibliography{references}

\begin{thebibliography}{10}

\bibitem{AC19}
\textsc{Anderson, D.F.} and \textsc{Cappelletti, D.}
\newblock Discrepancies between extinction events and boundary equilibria in
  reaction networks.
\newblock {\em \textit{J. Math. Biol.}}, \textbf{79}:1253--1277, 2019.

\bibitem{ACK17}
\textsc{Anderson, D.F.}, \textsc{Cappelletti, D.}, and \textsc{Kurtz, T.G.}
\newblock Finite time distributions of stochastically modeled chemical systems
  with absolute concentration robustness.
\newblock {\em \textit{SIAM J. Appl. Dyn. Syst.}}, \textbf{16}:1309--1339,
  2017.

\bibitem{AEJ14}
\textsc{Anderson, D.F.}, \textsc{Enciso, G.}, and \textsc{Johnston, M.D.}
\newblock Stochastic analysis of biochemical reaction networks with absolute
  concentration robustness.
\newblock {\em \textit{J. R. Soc. Interface}}, \textbf{11}:20130943, 2014.

\bibitem{B09}
\textsc{Balcan, D., et al.}
\newblock Multiscale mobility networks and the spatial spreading of infectious
  diseases.
\newblock {\em \textit{Proc. Natl. Acad. Sci. USA}}, 106:21484--21489, 2009.

\bibitem{BP18}
\textsc{Banaji, M.} and \textsc{Pantea, C.}
\newblock The inheritance of nondegenerate multistationarity in chemical
  reaction networks.
\newblock {\em \textit{SIAM J Appl Math}}, \textbf{78}:1105--1130, 2018.

\bibitem{BI09}
\textsc{Berger, S.I.} and \textsc{Iyengar, R.}
\newblock Network analyses in systems pharmacology.
\newblock {\em \textit{Bioinformatics}}, 25:2466--2472, 2009.

\bibitem{BGK16}
\textsc{Briat, C.}, \textsc{Gupta, A.}, and \textsc{Khammash, M.}
\newblock Antithetic integral feedback ensures robust perfect adaptation in
  noisy biomolecular networks.
\newblock {\em \textit{Cell Syst.}}, pages 15--26.

\bibitem{CGK20}
\textsc{Cappelletti, D.}, \textsc{Gupta, A.}, and \textsc{Khammash, M.}
\newblock A hidden integral structure endows absolute concentration robust
  systems with resilience to dynamical concentration disturbances.
\newblock {\em \textit{J. Royal Soc. Interface}}, page 20200437.

\bibitem{C18}
\textsc{Charlebois, D.A.}
\newblock Multiscale effects of heating and cooling on genes and gene networks.
\newblock {\em \textit{Proc. Natl. Acad. Sci. USA}},
  \textbf{115}:E10797--E10806, 2018.

\bibitem{CFMW17}
\textsc{Conradi, C.}, \textsc{Feliu, E.}, \textsc{Mincheva, M.}, and
  \textsc{Wiuf, C.}
\newblock Identifying parameter regions for multistationarity.
\newblock {\em PLOS Comp Biol}, 13:e1005751, 2017.

\bibitem{CFRS07}
\textsc{Conradi, C.}, \textsc{Flockerzi, D.}, \textsc{Raisch,J.}, and
  \textsc{Stelling,J.}
\newblock Subnetwork analysis reveals dynamic features of complex (bio)chemical
  networks.
\newblock {\em \textit{Proc. Natl. Acad. Sci. USA}}, \textbf{104}:19175--19180,
  2007.

\bibitem{CF05}
\textsc{Craciun, G.} and \textsc{Feinberg, M.}
\newblock Multiple equilibria in complex chemical reaction networks. {I}. the
  injectivity property.
\newblock {\em \textit{SIAM J. Appl. Math.}}, \textbf{65}:1526--1546, 2005.

\bibitem{CF06}
\textsc{Craciun, G.} and \textsc{Feinberg, M.}
\newblock Multiple equilibria in complex chemical reaction networks. {II}. the
  speciesreaction graph.
\newblock {\em \textit{SIAM J. Appl. Math.}}, \textbf{66}:1321--1338, 2006.

\bibitem{CF10}
\textsc{Craciun, G.} and \textsc{Feinberg, M.}
\newblock Multiple equilibria in complex chemical reaction networks: semiopen
  mass action systems.
\newblock {\em \textit{SIAM J. Appl. Math.}}, \textbf{70}:1859--1877, 2010.

\bibitem{CHW08}
\textsc{Craciun, G.}, \textsc{Helton, J.W.}, and \textsc{Williams, R.J.}
\newblock Homotopy methods for counting reaction network equilibria.
\newblock {\em \textit{Math. Biosci.}}, \textbf{216}:140--149, 2008.

\bibitem{DFW17}
\textsc{de Freitas, M. M.}, \textsc{Feliu, E.}, and \textsc{Wiuf, C.}
\newblock Intermediates, catalysts, persistence, and boundary steady states.
\newblock {\em \textit{J. Math. Biol.}}, \textbf{74}:887--932, 2017.

\bibitem{DWF17}
\textsc{de Freitas, M. M.}, \textsc{Wiuf, C.}, and \textsc{Feliu, E.}
\newblock Intermediates and generic convergence to equilibria.
\newblock {\em \textit{Bull. Math. Biol.}}, \textbf{79}:1662--1686, 2017.

\bibitem{DDK19}
\textsc{Dom\'{i}nguez-Garc\'{i}a, V.}, \textsc{Dakos, V.}, and
  \textsc{K\'{e}fi, S.}
\newblock Unveiling dimensions of stability in complex ecological networks.
\newblock {\em \textit{Proc. Natl. Acad. Sci. USA}}, \textbf{116}:25714--25720,
  2019.

\bibitem{E98}
\textsc{Ellison, P.}
\newblock {\em \textit{The Advanced Deficiency Algorithm and Its Applications
  to Mechanism Discrimination}}.
\newblock Ph.D. thesis. University of Rochester, 1998.

\bibitem{EK20}
\textsc{Enciso, G.} and \textsc{Kim, J.}
\newblock Absolutely robust controllers for chemical reaction networks.
\newblock {\em \textit{J. Royal Soc. Interface}}, \textbf{17}:20200031, 2020.

\bibitem{F19}
\textsc{Feinberg, M.}
\newblock {\em \textit{Foundations of Chemical Reaction Network Theory}}.
\newblock Applied Mathematical Sciences. Springer International Publishing,
  Cham, 2019.

\bibitem{FW11}
\textsc{Feliu, E.} and \textsc{Wiuf, C.}
\newblock Enzyme sharing as a cause of multistationarity in signaling systems.
\newblock {\em \textit{J. R. Soc. Interface}}, \textbf{9}:1224--1232, 2011.

\bibitem{GHMS20}
\textsc{Gross, E.}, \textsc{Harrington, H.} \textsc{Meshkat, N.}, and
  \textsc{Shiu, A.}
\newblock Joining and decomposing reaction networks.
\newblock {\em \textit{J. Math. Biol.}}, \textbf{80}:1683--1731, 2020.

\bibitem{HH10}
\textsc{Haus, U.-U.} and \textsc{Hemmecke, R.}
\newblock Decomposition of reaction networks: The initial phase of the
  permanganate/oxalic acid reaction.
\newblock {\em \textit{J. Math. Chem.}}, \textbf{48}:305--312, 2010.

\bibitem{H19}
\textsc{Hoessly, L.}
\newblock Stationary distributions via decomposition of stochastic reaction
  networks.
\newblock {\em \url{https://arxiv.org/abs/1910.02871}}, 2019.

\bibitem{HJ72}
\textsc{Horn, F.} and \textsc{Jackson, R.}
\newblock General mass action kinetics.
\newblock {\em \textit{Arch. Ration. Mech. Anal.}}, \textbf{47}:81--116, 1972.

\bibitem{JS13}
\textsc{Joshi, B.} and \textsc{Shiu, A.}
\newblock Atoms of multistationarity in chemical reaction networks.
\newblock {\em \textit{J. Math. Chem.}}, \textbf{51}:153--178, 2013.

\bibitem{KCFS11}
\textsc{Kaltenbach, H.M.}, \textsc{Constantinescu, S.}, \textsc{Feigelman, J.},
  and \textsc{Stelling, J.}
\newblock Graph-based decomposition of biochemical reaction networks into
  monotone subsystems.
\newblock In \textsc{Przytycka T.M.} and \textsc{Sagot M.F.}, editors, {\em
  \textit{Algorithms in Bioinformatics (WABI 2011)}}, volume 6833 of {\em
  Lecture Notes in Computer Science}, pages 139--150. Springer-Verlag, Berlin,
  Heidelberg, 2011.

\bibitem{PHPS05}
\textsc{Papin, J.A., et al.}
\newblock Reconstruction of cellular signalling networks and analysis of their
  properties.
\newblock {\em \textit{Nat. Rev. Mol. Cell Biol.}}, \textbf{6}:99--111, 2005.

\bibitem{PF20}
\textsc{Pascual-Escudero, B.} and \textsc{Feliu, E.}
\newblock Local and global robustness in systems of polynomial equations.
\newblock {\em \url{https://arxiv.org/abs/2005.08796}}, 2020.

\bibitem{P18}
\textsc{Peel, L.}, \textsc{Delvenne, J.-C.}, and \textsc{Lambiotte, R.}
\newblock Multiscale mixing patterns in networks.
\newblock {\em \textit{Proc. Natl. Acad. Sci. USA}}, \textbf{115}:4057--4062,
  2018.

\bibitem{SZHS12}
\textsc{Schuetz, R., et al.}
\newblock Multidimensional optimality of microbial metabolism.
\newblock {\em \textit{Science}}, \textbf{336}:601--604, 2012.

\bibitem{SAF09}
\textsc{Shinar, G.}, \textsc{Alon, U.}, and \textsc{Feinberg, M.}
\newblock Sensitivity and robustness in chemical reaction networks.
\newblock {\em \textit{SIAM J. Appl. Math.}}, \textbf{69}:977--998, 2009.

\bibitem{SF10}
\textsc{Shinar, G.} and \textsc{Feinberg, M.}
\newblock Structural sources of robustness in biochemical reaction networks.
\newblock {\em \textit{Science}}, \textbf{327}:1389--1391, 2010.

\bibitem{YSNL07}
\textsc{Yoon, J.}, \textsc{Si, Y.}, \textsc{Nolan, R.}, and \textsc{Lee, K.}
\newblock Modular decomposition of metabolic reaction networks based on flux
  analysis and pathway projection.
\newblock {\em \textit{Bioinformatics}}, \textbf{23}:2433--2440, 2007.

\bibitem{WX20}
C.~Wiuf and C.~Xu.
\newblock Classification and threshold dynamics of stochastic reaction
  networks.
\newblock {\em \url{https://arxiv.org/abs/2012.07954}}, 2020.

\end{thebibliography}
\end{document}